\newcommand\numberthis{\addtocounter{equation}{1}\tag{\theequation}}
\newtheorem{lemma}{Lemma}
\newtheorem{theorem}[lemma]{Theorem}
\theoremstyle{definition}
\newtheorem{definition}[lemma]{Definition}
\newtheorem{corollary}[lemma]{Corollary}
\newtheorem{proposition}{Proposition}
\newcommand\floor[1]{\lfloor{#1}\rfloor}
\newcommand\ceil[1]{\lceil{#1}\rceil}
\begin{document}

\title{Quantum Bicyclic Hyperbolic Codes}
\author{
Sankara Sai Chaithanya Rayudu\\
Department of Electrical Engineering\\
Indian Institute of Technology Madras\\
Chennai 600036 India\\
Email: ee14b050@ee.iitm.ac.in
\and
Pradeep Kiran Sarvepalli\\
Department of Electrical Engineering\\
Indian Institute of Technology Madras\\
Chennai 600036 India\\
Email: pradeep@ee.iitm.ac.in
}
\date{\vspace{-5ex}}
\maketitle

\begin{abstract}
Bicyclic codes are a generalization of the one dimensional (1D) cyclic codes to two dimensions (2D). Similar to the 1D case, in some cases, 2D cyclic codes can also be constructed to guarantee a specified minimum distance. Many aspects of these codes are yet unexplored. Motivated by the problem of constructing quantum codes, in this paper, we study some structural properties of certain bicyclic codes. We show that a primitive narrow-sense bicyclic hyperbolic code of length $n^2$ contains its dual if and only if its design distance is lower than $n-\Delta$, where $\Delta=\mathcal{O}(\sqrt{n})$. We extend the sufficiency condition to the non-primitive case as well. We also show that over quadratic extension fields, a primitive bicyclic hyperbolic code of length $n^2$ contains Hermitian dual if and only if its design distance is lower than  $n-\Delta_h$, where $\Delta_h=\mathcal{O}(\sqrt{n})$. Our results are analogous to some structural results known for BCH and Reed-Solomon codes. They further our understanding of bicyclic codes. We also give an application of these results by showing that we can construct two classes of quantum bicyclic codes based on our results.
\end{abstract}

\section{Introduction}
Cyclic codes 
are an important class of error-correcting codes. 
Many popular codes, such as BCH codes and Reed-Solomon codes, are cyclic codes. 
Cyclic codes with guarantees on the minimum distance of the code are easy to construct. 
Many subclasses of cyclic codes also have efficient decoders making them suitable for practical applications.  
For quantum error correction, a classical code can be used to construct quantum code \cite{steane96, calderbank96,calderbank98,ashikhmin01} if the code contains its (Euclidean or Hermitian) dual.
Using these constructions many 
(cyclic) quantum codes have been proposed \cite{grassl97,grassl04,ketkar06}. 
Grassl {\em et al.} gave a simple test for identifying cyclic codes that contain their duals \cite{grassl97}.
Steane \cite{steane99}  gave a condition, based on the designed distance, to check whether a primitive binary BCH contains its Euclidean dual. 
Subsequently, Aly {\em et al.} \cite{aly07} extended this result to the higher alphabet and non-primitive codes. 
They proved that a primitive BCH code of length $n$ contains its dual when its design distance is less than $\delta = \mathcal{O}(\sqrt{n})$.
These results are based on significant structural results of cyclic codes.
However, most of the previous work has been limited to one dimension \cite{yuan17, liu17, zhang18, li13, chen15, liu19}, even though classically, cyclic codes have been generalized to higher dimensions.

Two dimensional (2D) cyclic codes, also called bicyclic codes,  are a generalization of one-dimensional cyclic codes to two dimensions. 
In the case of one-dimensional cyclic codes, codewords can be considered as vectors. 
The codewords of bicyclic codes can be viewed as matrices. 
Imai \cite{IMAI19771} introduced a general theory of 2D cyclic codes.
Since then, there has been extensive work on bicyclic codes, see \cite{blahut08} for a good overview of related work and references.
However, there does not appear to be much work on quantum bicyclic codes. 
 
There are many significant differences between cyclic and bicyclic codes which makes the analysis of bicyclic codes much more challenging than cyclic codes. 
The characterization of 2D cyclic codes is somewhat more complex than the cyclic codes. 
For instance, bicyclic codes do not have a unique generator polynomial, unlike the cyclic codes. 
Furthermore, the division of polynomials by bivariate polynomials does not lead to unique remainders.
All these reasons motivate our study of quantum bicyclic codes. 
In this paper, we focus on a class of bicyclic codes called hyperbolic codes, see \cite{blahut08}. 
(Note that the term hyperbolic codes also refers to a class of topological quantum codes.)
 
Our main contributions are as follows:
\begin{compactenum}[i)]

\item We give necessary and sufficient condition for a bicyclic hyperbolic code to contain its Euclidean dual.
\item For a bicyclic hyperbolic code, defined over a quadratic extension field, we also give a necessary and sufficient condition for it to contain its Hermitian dual. 
\item We construct new quantum bicyclic codes.
\end{compactenum}

Our analysis of the cyclotomic cosets of bicyclic codes could be of independent interest and use in the study of cyclic codes over higher dimensions. Our paper is organized as follows. After a brief review of the necessary background in Section~\ref{sec:bg}, we prove structural results on bicyclic codes in 
Sections~\ref{sec:main-ed}~and~\ref{sec:main-hd}. 
We conclude with a brief discussion on the significance of these results and the scope for future work. 
\section{Background}\label{sec:bg}
Let $q$ be a prime power, and $\mathbb{F}_q$ denote a finite field of $q$ elements. 
We denote by $\mathbb{Z}_m$ the set of integers $\{0,1,\ldots, m-1\}$.

\subsection{Bicyclic codes}
A $[n,k]_q$ linear code is a $k$-dimensional subspace of $\mathbb{F}_q^n$.
A bicyclic code of length $n_1\times n_2$ can be viewed as a subspace of $\mathbb{F}_q^{n_1\times n_2}$.
We give a quick review of bicyclic codes; readers interested in more details should refer to \cite{blahut08, IMAI19771}.

\begin{definition}[Bicyclic codes]
Let $\gcd(n_1n_2,q)=1$ and  $C$ a linear code of length $n_1n_2$ over a field $\mathbb{F}_q$, whose codewords can be written as two dimensional arrays of size $n_1 \times n_2$. If $C$ is closed under both circular right shift of columns and circular down shift of rows, then $C$ is called a 
bicyclic code of length $n_1\times n_2$ over $\mathbb{F}_q$. 
\end{definition}

\noindent Let  $c$ be a codeword of a bicyclic code $C$ of size $n_1\times n_2$. 
{
Then we can write $c$ as follows: 
    \begin{equation*}
        c =
        \begin{bmatrix}
            c_{0,0} & c_{0,1} & c_{0,2} & \dots  & c_{0,n_2-1} \\
            c_{1,0} & c_{1,1} & c_{1,2} & \dots  & c_{1,n_2-1} \\
            \vdots & \vdots & \vdots & \ddots & \vdots \\
            c_{n_1-1,0} & c_{n_1-1,1} & c_{n_1-1,2} & \dots  & c_{n_1-1,n_2-1}
        \end{bmatrix}
    \end{equation*}
}
We denote by  $c^{(0,1)}$ the codeword obtained by circular right shift of columns of $c$  and $c^{(1,0)}$ the codeword obtained by circular down shift of rows of $c$.

\begin{minipage}{0.45\textwidth}
    \begin{equation*}
        c^{(0,1)} = 
        \begin{bmatrix}
            c_{0,n_2-1} & c_{0,0} & c_{0,1} & \dots  & c_{0,n_2-2} \\
            c_{1,n_2-1} & c_{1,0} & c_{1,1} & \dots  & c_{1,n_2-2} \\
            \vdots & \vdots & \vdots & \ddots & \vdots \\
            c_{n_1-1,n_2-1} & c_{n_1-1,0} & c_{n_1-1,1} & \dots  & c_{n_1-1,n_2-2}
        \end{bmatrix}
    \end{equation*}
\end{minipage}
\begin{minipage}{0.55\textwidth}
    \begin{equation*}
        c^{(1,0)} =
        \begin{bmatrix}
            c_{n_1-1,0} & c_{n_1-1,1} & c_{n_1-1,2} & \dots  & c_{n_1-1,n_2-1} \\
            c_{0,0} & c_{0,1} & c_{0,2} & \dots  & c_{0,n_2-1} \\
            \vdots & \vdots & \vdots & \ddots & \vdots \\
            c_{n_1-2,0} & c_{n_1-2,1} & c_{n_1-2,2} & \dots  & c_{n_1-2,n_2-1}
        \end{bmatrix}
    \end{equation*}
\end{minipage}

\subsubsection*{Polynomial representation}
As in the case of 1D cyclic codes, it is convenient to represent codewords as polynomials over $\mathbb{F}_q$. 
Let $\mathbb{F}_q[X, Y]$ denote the ring of polynomials in two variables over $\mathbb{F}_q$.
Codewords of a bicyclic code can be represented as bivariate polynomials in $\mathbb{F}_q[X, Y]$.
We will call the polynomial associated with a codeword as a code polynomial. The code polynomials of a bicyclic code of size $n_1\times n_2$ will be of the following form.
\begin{equation*}
        c(X,Y) = \sum_{i=0}^{n_1-1} \sum_{j=0}^{n_2-1} c_{i,j}\  X^i Y^j
\end{equation*}
Then the code polynomials corresponding to codewords $c^{(1,0)}$ and $c^{(0,1)}$ will be
\begin{align*}
    c^{(1,0)}(X,Y) &= X c(X,Y) \text{ mod } (X^{n_1}-1, Y^{n_2}-1) \\
    c^{(0,1)}(X,Y) &= Y c(X,Y) \text{ mod } (X^{n_1}-1, Y^{n_2}-1)
\end{align*}

Linearity of bicyclic codes implies that for any code polynomial $c(X,Y)$ in a bicyclic code $C$, and $a(X,Y)\in \mathbb{F}_q[X,Y]$ we also have
$a(X,Y) c(X,Y)\bmod (X^{n_1}-1, Y^{n_2}-1)$ in $C$.
Therefore, bicyclic codes can be viewed as ideals in the quotient  polynomial ring $R := \mathbb{F}[X,Y]/\langle X^{n_1}-1, Y^{n_2}-1\rangle$.

\subsubsection*{Characterization}
It is well know that when $\gcd(n,q)=1$, a 1D cyclic code of length $n$ over $\mathbb{F}_q$ is completely characterized by a unique monic generator polynomial $g(X)$
where $g(X)$ divides $X^n-1$. 
Due to lack of direct generalization of polynomial division from 1D to 2D, We cannot always characterize bicyclic codes by a unique generator polynomial.
Therefore, bicyclic codes are better analyzed in terms of the common zeros of the code polynomials.

Given a bicyclic code of length $n_1\times n_2$ over $\mathbb{F}_q$, the common zeros of all the codeword polynomials, $X^{n_1}-1$ and $Y^{n_2}-1$ completely characterize the bicyclic code when $q$ is co-prime to both $n_1$ and $n_2$. Any polynomial of length $n_1\times n_2$ which vanishes at these common zeros is a codeword of that bicyclic code. Common zeros will be of the form $(\alpha^i,\beta^j)$ where $\alpha$ and $\beta$ are the $n_1^{th}$ and $n_2^{th}$ primitive roots of unity.
The set of all such possible zeros is
\begin{equation} \label{}
    \Omega = \left\{ (\alpha^i,\beta^j)\ \middle|\ 0\leq i < n_1, 0\leq j < n_2 \right\}.
\end{equation}
It is easy to keep track of zeros just by the exponents of $\alpha$ and $\beta$. 
Therefore the {\em defining set} of a bicyclic code $C$ is given as follows: 
\begin{equation}
    Z = \left\{ (x,y)\ \middle|\ (\alpha^x,\beta^y) \text{ is a common zero of  }C \right\}
\end{equation}
Observe that since the codewords are over the field $\mathbb{F}_q$ and $\alpha$ and $\beta$ may lie in an extended field of $\mathbb{F}_q$, If $(x,y)$ is in the defining set of code $C$ then all the points of the form $(xq^k,yq^k)$ for $k \in \mathbb{Z},k\geq 0$, should also be in the defining set of $C$. 
The set of all such points is called the $q$-ary cyclotomic coset of $(x,y)$ modulo $n_1$ and $n_2$ and denoted as $Coset_{(x,y)}$.
\begin{equation}
Coset_{(x,y)} = \left\{ (xq^k \bmod n_1, yq^k \bmod  n_2)\ \middle|\ k \in \mathbb{Z}, k \geq 0 \right\}
\end{equation}

\subsection{Bicyclic hyperbolic codes}
\begin{definition}[Bicyclic hyperbolic codes]
Let $n_i=q^{m_i}-1$ and $\gcd(n_i,q)=1$ for $i=1,2$.
A bicyclic code $C$ of length $n_1 \times n_2$ over $\mathbb{F}_q$ is called a hyperbolic code with designed distance $\delta$ if the defining set of $C$ is of the following form
\begin{equation}
    Z = \bigcup_{(x,y) \in Z_{des}} \hspace{-10pt} Coset_{(x,y)} \label{eq:bch-Z}
\end{equation}
\begin{equation}
    Z_{des} = \left\{ ((a+x') \bmod  n_1, (b+y') \bmod  n_2)\ \middle|\ x'y' < \delta\right\}\label{eq:hyper-zdes}
\end{equation}
where $1 \leq x' \leq n_1$, $1 \leq y' \leq n_2$. $Z_{des}$ is the designed set of $C$ which completely characterizes the code.
\end{definition}
In the above definition, there is a freedom to choose the values of  $a$ and $b$. A bicyclic hyperbolic code with designed distance $\delta$ is guaranteed to have a minimum distance greater than or equal to $\delta$.
Our definition of bicyclic hyperbolic code is slightly different from the definition given in \cite{blahut08}. Analogous to one dimensional case we say the hyperbolic code is primitive if $n_1 = n_2 = n = q^m-1$ 
and narrow-sense if $a=b=0$.
In this paper we focus on the narrow-sense primitive bicyclic hyperbolic codes with $n_1=n_2=n=q^m-1$.
In the rest of this paper, We denote a bicyclic hyperbolic code of design distance $d$ by $\mathscr{H}(n\times n, q;d)$.
\section{Euclidean Dual Containing Bicyclic Codes}\label{sec:main-ed}
The Euclidean inner product of two elements $u,v$ in $\mathbb{F}_q^{n_1\times n_2}$ is defined as follows:
\begin{align}
u\cdot v = \sum_{i,j} u_{i,j}v_{i,j}. \label{eq:ip}
\end{align}
The Euclidean dual code $C^{\perp}$ of a  linear bicyclic code $C \subseteq \mathbb{F}_q^{n_1\times n_2}$ is defined as 
\begin{align}
C^{\perp}=\left\{ u\in  \mathbb{F}_q^{n_1\times n_2}\ \middle|\ u\cdot v = 0\  \text{ for all }  v \in C\right\}.\label{eq:e-dual}
\end{align} 
If $C$ is bicyclic, then $C^\perp$ is also bicyclic and its zeros can be given in terms of the zeros of $C$.
When $gcd(n_i,q)=1$, the defining set of $C^\perp$ is completely characterized in terms of the defining set of $C$ \cite{IMAI19771}.
Suppose $Z$ is the defining set of $C$ and $Z^\perp$ the defining set of $C^\perp$. Then 
\begin{align}
Z^\perp=Z_{tot}\setminus Z^{-1}, \label{eq:z-dual}
\end{align}
where
\begin{equation}
Z_{tot} =\left\{(x,y)\ \middle|\ x\in \mathbb{Z}_{n_1}, y\in \mathbb{Z}_{n_2}\right\} \\
\end{equation}
\begin{equation} 
Z^{-1} = \left\{(-x \bmod n_1, -y\bmod n_2)\ \middle|\ (x,y)\in Z\right\}.
\end{equation}
A simple test to check if a  cyclic code  contains its dual in terms of its defining set was given in 
\cite{grassl97}. 
The same condition holds for bicyclic codes, which we are giving in the following lemma.  
Although straightforward, we give the proof for completeness.

\begin{lemma}\label{lm:dual-containing}
Let $C$ be the bicyclic code of length $n_1\times n_2$ over $\mathbb{F}_q$ such that $\text{gcd}(n_i,q) = 1$ and  let $Z$ be the defining set of $C$. Then $C$ contains its Euclidean dual $C^\perp$ if and only if 
\begin{equation}\label{lm:dual-containing_eq_1}
    Z\cap Z^{-1} = \emptyset
\end{equation}
where $Z^{-1} = \left\{ (-x \bmod n_1,-y \bmod n_2)\ \middle|\ (x,y)\in Z\right\}$.
\end{lemma}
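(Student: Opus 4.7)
The plan is to reduce the statement to a purely set-theoretic comparison between defining sets, relying on the characterization of the dual's defining set already given in equation~(\ref{eq:z-dual}). The key principle is the usual order-reversing correspondence: for bicyclic codes with coprime lengths, a codeword lies in the code exactly when its polynomial vanishes on the defining set, so adding more zeros shrinks the code. Consequently, for two bicyclic codes $C_1, C_2$ of the same length with defining sets $Z_1, Z_2$, we have $C_1 \subseteq C_2$ if and only if $Z_2 \subseteq Z_1$.

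The main argument is then a two-line chain of equivalences. I would apply the above principle to the pair $C^\perp \subseteq C$, getting that this containment is equivalent to $Z \subseteq Z^\perp$. Then I would substitute the expression $Z^\perp = Z_{tot} \setminus Z^{-1}$ from (\ref{eq:z-dual}) to conclude that $Z \subseteq Z^\perp$ holds if and only if $Z$ is disjoint from $Z^{-1}$, i.e.\ $Z \cap Z^{-1} = \emptyset$. Both directions of the lemma fall out simultaneously from this biconditional.

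The only step that requires any care is justifying the order-reversing correspondence in the bicyclic setting. For completeness I would state explicitly that, given $\gcd(n_i, q) = 1$, every bicyclic code of length $n_1 \times n_2$ is uniquely determined by its defining set $Z \subseteq \mathbb{Z}_{n_1} \times \mathbb{Z}_{n_2}$, since it equals the set of all $c(X,Y) \in R$ vanishing at $\{(\alpha^x,\beta^y) : (x,y) \in Z\}$. From this description, enlarging $Z$ imposes more vanishing conditions and hence produces a smaller code, while shrinking $Z$ produces a larger one; this gives the containment-reversing bijection used above.

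I do not anticipate a genuine obstacle: the nontrivial content (that $Z^\perp = Z_{tot} \setminus Z^{-1}$ under $\gcd(n_i, q) = 1$) has already been cited from Imai~\cite{IMAI19771}, and the remainder is formal manipulation of sets. The proof should be short, and the main thing to be explicit about is that both $C$ and $C^\perp$ are bicyclic so that the defining-set characterization applies symmetrically to both sides.
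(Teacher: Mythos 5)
Your proposal is correct and follows essentially the same route as the paper: both invoke the order-reversing correspondence between codes and defining sets, substitute $Z^\perp = Z_{tot}\setminus Z^{-1}$ from Eq.~\eqref{eq:z-dual}, and read off $C^\perp\subseteq C \iff Z\subseteq Z_{tot}\setminus Z^{-1} \iff Z\cap Z^{-1}=\emptyset$. Your extra remark justifying why enlarging the defining set shrinks the code is a welcome (if minor) elaboration of a step the paper states without proof.
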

\begin{proof}
Let $C_1$ and $C_2$ be two bicyclic cyclic codes with defining sets $Z_1$ and $Z_2$. The code $C_1$ is contained in $C_2$ means that all the codewords that are in $C_1$ are also in $C_2$. This is equivalent to saying that the defining set (or common zero set) of $C_2$ is contained in that of $C_1$
i.e., $Z_2\subseteq Z_1$.
 From Eq.~\eqref{eq:z-dual}, $Z_{tot}\setminus Z^{-1}$ is the defining set of dual code $C^{\perp}$. 
Hence, $C^{\perp} \subseteq C$ if and only if $Z \subseteq Z_{tot}\setminus Z^{-1}$.
This is possible if and only if $Z \cap Z^{-1} = \emptyset$.
\end{proof}

The above condition for Euclidean dual containing can be further simplified for bicyclic hyperbolic codes with a designed set $Z_{des}$ since the  defining set $Z$ is completely characterized $Z_{des}$.
\begin{lemma}\label{lemma_2}
Let $C$ be a bicyclic  code of length $n_1\times n_2$ over $\mathbb{F}_q$ such that $\text{gcd}(n_1n_2,q) = 1$. Let $Z_{des}$ and $Z$ be the designed set and defining set of $C$ respectively. Then $C^\perp \subseteq C$ if and only if 
\begin{equation}\label{lemma_2_eq_1}
    Z_{des}\cap Z^{-1} = \emptyset
\end{equation}
where $Z^{-1} = \left\{ (-x \bmod n_1, -y \bmod n_2)\ \middle|\ (x,y)\in Z\right\}$.
\end{lemma}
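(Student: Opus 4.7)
The plan is to reduce the claim to \Cref{lm:dual-containing}, which already states $C^{\perp}\subseteq C \iff Z\cap Z^{-1}=\emptyset$. It therefore suffices to prove the set-theoretic equivalence $Z\cap Z^{-1}=\emptyset \iff Z_{des}\cap Z^{-1}=\emptyset$. One direction is immediate from $Z_{des}\subseteq Z$ (by construction in Eq.~\eqref{eq:bch-Z}): if the intersection with $Z^{-1}$ is empty on the larger set $Z$, it is also empty on the subset $Z_{des}$.

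The nontrivial direction is the converse, and the key structural observation I would establish first is that $Z^{-1}$ is itself a union of $q$-ary cyclotomic cosets. This rests on the identity $-(xq^k)\equiv(-x)q^k\pmod{n_i}$, i.e.\ negation modulo $n_i$ commutes with the $q$-multiplication map that generates cosets. Since $Z$ is a union of cosets (by Eq.~\eqref{eq:bch-Z}), applying the coordinatewise negation takes cosets to cosets and thus $Z^{-1}$ is a union of cosets as well.

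Given this, suppose for contradiction that $(u,v)\in Z\cap Z^{-1}$. From $(u,v)\in Z$ and Eq.~\eqref{eq:bch-Z}, there exists $(x',y')\in Z_{des}$ with $(u,v)\in Coset_{(x',y')}$. Because cyclotomic cosets partition $\mathbb{Z}_{n_1}\times\mathbb{Z}_{n_2}$ into equivalence classes, this forces $Coset_{(u,v)}=Coset_{(x',y')}$. On the other hand, $(u,v)\in Z^{-1}$ together with the closure property just established implies $Coset_{(u,v)}\subseteq Z^{-1}$, hence $(x',y')\in Z^{-1}$. Therefore $(x',y')\in Z_{des}\cap Z^{-1}$, contradicting the hypothesis and completing the proof.

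The only real content is the closure property of $Z^{-1}$ under the cyclotomic-coset action; everything else is set-theoretic bookkeeping using that $Z$ is generated from $Z_{des}$ by the same action. I do not anticipate any serious obstacle beyond writing this verification cleanly.
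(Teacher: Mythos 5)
Your proof is correct and follows essentially the same route as the paper: reduce to \Cref{lm:dual-containing} and show $Z\cap Z^{-1}=\emptyset \iff Z_{des}\cap Z^{-1}=\emptyset$ using the fact that $Z^{-1}$ is a union of cyclotomic cosets and that $Z$ is generated from $Z_{des}$ by the coset action. The only difference is cosmetic: you argue the converse by contradiction and explicitly justify, via $-(xq^k)\equiv(-x)q^k\pmod{n_i}$, that $Z^{-1}$ is coset-closed, a fact the paper simply asserts.
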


\begin{proof}
By Lemma~\ref{lm:dual-containing}, $C\supseteq C^\perp$ if and only if $Z\cap Z^{-1}=\emptyset$.
To prove the lemma it suffices to show that $Z\cap Z^{-1}=\emptyset$ if and only  $Z_{des}\cap Z^{-1} = \emptyset$.
Since $Z_{des}\subseteq Z$ if $Z\cap Z^{-1}=\emptyset$, then Eq.~\eqref{lemma_2_eq_1} holds.
Suppose now that Eq.~\eqref{lemma_2_eq_1} holds.
If $(x,y)\in Z_{des}$, then $(x,y) \not\in Z^{-1}$.
As $Z^{-1}$ is the union of cyclotomic cosets, it follows that 
 $C_{(x,y)}\cap Z^{-1}=\emptyset$. 
From Eq.~\eqref{eq:bch-Z}, we have
 $Z=\cup_{(x,y)\in Z_{des}}Coset_{(x,y)}$, it follows that $Z\cap Z^{-1}=\emptyset$.
\end{proof}



The following theorem gives an easy test, based on the designed distance, to check if a primitive narrow-sense hyperbolic code contains its Euclidean dual. In the proof of the following theorem \ref{theorem_3} we will use lemma \ref{apx_lemma_1}, which is stated after the theorem \ref{theorem_3} to understand the motivation for lemma \ref{apx_lemma_1}.

\begin{theorem}[Euclidean dual containing bicyclic codes]\label{theorem_3}
A primitive narrow-sense hyperbolic code of length $n\times n$ over $\mathbb{F}_q$, where $n=q^m-1$ and $m>3$, contains its Euclidean dual if and only if the design distance $d$ satisfies $d \leq \delta$, where
\begin{equation} \label{theorem_3_eq_1} 
        \delta = 
    \begin{cases}
        (q^m-1) - 2(q^{m/2} - 1) & m \text{ is even} \\
        (q^m-1) - (q^{\frac{m-1}{2}}) & m \text{ is odd}
    \end{cases}
\end{equation}
\end{theorem}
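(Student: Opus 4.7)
The plan is to apply Lemma~\ref{lemma_2} and reformulate the dual-containing condition as an arithmetic statement about the map $f_u: z \mapsto n - (zu \bmod n)$ for $u$ in the multiplicative subgroup $\langle q \rangle = \{q^k \bmod n : 0 \leq k < m\}$. In the narrow-sense primitive case, for any $d \leq n$ the designed set reduces to $Z_{des} = \{(x,y) \in \{1,\ldots,n-1\}^2 : xy < d\}$. Unpacking the definitions of $Z$ and $Z^{-1}$, one checks that $Z_{des} \cap Z^{-1} \neq \emptyset$ if and only if there exist $x, y \in \{1,\ldots,n-1\}$ and $u \in \langle q \rangle$ with $xy < d$ and $f_u(x) f_u(y) < d$. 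Hence the theorem reduces to showing that the largest $d$ for which no such triple exists is exactly $\delta$.

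For the forward implication ($d > \delta \Rightarrow C^\perp \not\subseteq C$) I would exhibit explicit witnesses. For even $m = 2s$, take $x = y = q^s - 1$ and $u = q^s$; the factorization $n = (q^s - 1)(q^s + 1)$ gives $(q^s - 1) q^s \equiv -(q^s - 1) \pmod{n}$, so $f_{q^s}(q^s - 1) = q^s - 1$, and both $xy$ and $f_u(x) f_u(y)$ equal $(q^s - 1)^2 = \delta$. For odd $m = 2s + 1$, take $x = 1$, $y = n - q^s$, $u = q^{s+1}$; the identity $q^{2s+1} \equiv 1 \pmod{n}$ forces $yu \equiv -1 \pmod{n}$, so $f_u(y) = 1$ and $f_u(x) = n - q^{s+1}$, whence $xy = n - q^s = \delta$ and $f_u(x)f_u(y) = n - q^{s+1} < \delta$. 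In both cases, setting $d = \delta + 1$ places both quantities strictly below $d$, producing the required collision.

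The converse ($d \leq \delta \Rightarrow C^\perp \subseteq C$) is the main obstacle and is exactly where Lemma~\ref{apx_lemma_1} is needed. I must show that for every $x, y \in \{1,\ldots,n-1\}$ and every $u \in \langle q \rangle$, $\max(xy,\; f_u(x) f_u(y)) \geq \delta$. The starting observation is the lattice identity $xu + f_u(x) = c \cdot n$ for some integer $c \geq 1$, which places $(x, f_u(x))$ on a sublattice of index $n$ in $\mathbb{Z}^2$. It is natural to work in base $q$, because multiplication by $u = q^k$ modulo $n = q^m - 1$ is a cyclic shift of the $m$ base-$q$ digits and $f_u$ additionally takes their digit-wise complement; the inequality then becomes a combinatorial statement about the digits of $x$ and $y$ and their shifts. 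The chief difficulty is that the cleaner bound $x f_u(x) \geq \delta$ holds only for ``balanced'' exponents (e.g. $u = q^s$ in the even case, where $f_{q^s}$ is an involution with smallest positive fixed point $q^s - 1$); for extremal exponents such as $u = q^{m-1}$ the product $x f_u(x)$ can be strictly less than $\delta$, and one must use the hyperbolic constraint $xy < \delta$ to couple the two coordinates and argue globally rather than coordinate-by-coordinate. The odd-$m$ case requires a slightly different combinatorial argument, since $|\langle q \rangle|$ is odd and the only involution available is $u = 1$, which is in line with the asymmetric form of $\delta$ in~\eqref{theorem_3_eq_1}.
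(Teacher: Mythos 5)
Your proposal is correct and follows essentially the same route as the paper: the reduction via Lemma~\ref{lemma_2} to showing $(-xq^l \bmod n)(-yq^l \bmod n)\geq\delta$ whenever $xy<\delta$, the deferral of that bound to the base-$q$ digit analysis of Lemma~\ref{apx_lemma_1}, and your necessity witnesses ($x=y=q^{m/2}-1$ with shift $m/2$ for even $m$, and $\{x,y\}=\{\delta,1\}$ with shift $(m+1)/2$ for odd $m$) are exactly the points the paper uses. The only step you leave implicit is the routine verification that the minima delivered by Lemma~\ref{apx_lemma_1}, namely $(q^{m/2}-1)^2$ for even $m$ and $q^m-1-q^{(m-1)/2}$ for odd $m$, coincide with $\delta$ in each parity.
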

\begin{proof}
    Let $C= \mathscr{H}(n\times n, q;d)$.
    First, we show the sufficiency of $d\leq \delta$ for $C$ to be dual containing. 
    For proving $C \supseteq C^\perp$ when $d \leq \delta$, it is  enough to show that
    $D=\mathscr{H}(n\times n, q;\delta)$ contains $D^\perp$. 
    Since $D\subseteq C$, this implies $C^\perp \subseteq C$.
    The designed set for $\mathscr{H}(n \times n,q;\delta)$ is given below: 
    \begin{equation}\label{theorem_3_eq_2} 
        Z_{des} = \left\{ (x,y)\ \middle|\ xy <\delta,\,1 \leq x,y \leq n\right\}
    \end{equation}
    By Lemma~\ref{lemma_2}, we need to show Eq.~\eqref{lemma_2_eq_1} holds i.e., $Z_{des}\cap Z^{-1} = \emptyset$ for $\mathscr{H}(n\times n,q;\delta)$.
    By the definition of $Z_{des}$ and $Z^{-1}$, for every $(x,y) \in Z_{des}$ the corresponding points $(-xq^l\mod{n}, -yq^l\mod{n})$ must be in $Z^{-1}$ for all $l \in \mathbb{Z}_m$. Therefore $Z_{des}\cap Z^{-1} = \emptyset$ if and only if the following inequality holds. 
    \begin{equation} \label{theorem_3_eq_3} 
        (-xq^l \text{ mod }n)(-yq^l\text{ mod }n) \geq \delta \quad \forall\; (x,y) \in Z_{des},\; l \in \mathbb{Z}_m.
    \end{equation}
    
    Observe that all the points $(x,y)\in Z_{des}$ satisfy $xy<\delta <n$. 
    Therefore, the $q$-ary expansions of $x$ and $y$ must satisfy the following constraint:
    If $q$-ary coefficients $x_i$ of $x$ are equal to zero for $i > k$ and $x_k \neq 0$, then $q$-ary coefficients $y_j$ of $y$ must be equal to zero for $j > m-1-k$. 
    Therefore all the points in $Z_{des}$ must be of the following form, for some $k\in \mathbb{Z}_m$.
    \begin{equation}\label{theorem_3_eq_4} 
        x = \sum_{i=0}^{k}x_i q^i,\ x_k\neq 0\quad \quad y = \sum_{j=0}^{m-1-k}y_j q^j
    \end{equation}
    Based on Eq.\eqref{theorem_3_eq_4}, let us partition the points in $Z_{des}$ into disjoint sets $Z_{des, 0}\cup Z_{des, 1}...\cup Z_{des, m-1}$ as follows.
    \begin{equation} \label{theorem_3_eq_5} 
        Z_{des,k} = \left\{(x,y)\in Z_{des}\ \middle|\ q^k-1 < x \leq q^{k+1}-1 \right\}
    \end{equation}
    Observe that, based on the definition of $Z_{des}$, for every point $(x,y) \in Z_{des}$ $(y,x)$ also belongs to $Z_{des}$. Therefore, it suffices to show the inequality
    in Eq.~\eqref{theorem_3_eq_3} holds for the points in $Z_{des}$ where $x \leq y$. This implies, from Eq.~\eqref{theorem_3_eq_4}~and~\eqref{theorem_3_eq_5}, instead of considering all points in $Z_{des}$, it is enough to consider just the following points: 
    \begin{equation}\label{theorem_3_eq_6} 
        (x,y) \in Z_{des,k} \text{ where } k \leq (m-1)/2.
    \end{equation}
    Now let us try to lower bound the value of $(-xq^l \text{ mod }n)(-yq^l\text{ mod }n)$ for $(x,y) \in Z_{des,k}$, $k \leq (m-1)/2$ and $l \in \{0,1,...,m-1\}$.
    \begin{compactenum}[i)]
        \item When $m$ is even: $\delta < q^m-1-q^{\floor{m/2}}$. From Lemma~\ref{apx_lemma_1}, 
        we have the following equality.
        \begin{align}\label{theorem_3_eq_7} 
            \min\limits_{\substack{(x,y) \in Z_{des}\\ 0\leq l\leq m-1}} (-xq^l\text{ mod }n)(-yq^l\text{ mod }n) &\geq \min\left\{q^m-1-q^{\ceil{m/2}-1},(q^{m/2}-1)^2\right\} \nonumber\\
            &= (q^{m/2}-1)^2 = \delta\ (\text{when }m\text{ is even}).
        \end{align}
        
        \item When $m$ is odd: $\delta = q^m-1-q^{\floor{m/2}}$. From Lemma~\ref{apx_lemma_1}, it follows that
        \begin{align}\label{theorem_3_eq_8} 
            \min\limits_{\substack{(x,y)\in Z_{des}\\ 0\leq l \leq m-1}} (-xq^l\bmod n)(-yq^l\bmod n) &= q^m-1-q^{\floor{m/2}} \nonumber\\
            &= \delta\ (\text{when }m\text{ is odd}).
        \end{align}
    \end{compactenum}
    This shows that every point $(u,v)\in Z^{-1}$ satisfies $uv\geq \delta$ and therefore cannot be in 
    $Z_{des}$ and thus satisfying Eq.~\eqref{lemma_2_eq_1}. Next we show that $d\leq \delta$ is a necessary condition for $C=\mathscr{H}(n\times n,q;d)$ to be dual containing. 
    Seeking a contradiction let us assume that $ \mathscr{H}(n\times n,q;d)^{\perp} \subseteq \mathscr{H}(n\times n,q;d)$ for some $d > \delta$.
    \begin{enumerate}
        \item When $m$ is even: consider the point $(x,y) = (\sqrt{\delta},\sqrt{\delta})$. Since $\delta < d$, from the definitions of $Z_{des}$ \eqref{theorem_3_eq_2} and $Z^{-1}$,
        \begin{equation}\label{theore3_eq_8_1}
            (\sqrt{\delta},\sqrt{\delta}) \in Z_{des}
            \quad \&\quad 
            \left((-\sqrt{\delta}q^l\text{ mod }n),(-\sqrt{\delta}q^l\text{ mod }n)\right) \in Z^{-1}.
        \end{equation}
        Since $\delta = (q^{m/2}-1)^2$, for $l=m/2$ we have the following equality.
        \begin{align}\label{theorem_3_eq_9}
            (-\sqrt{\delta}q^l) \bmod n &= -q^{m}+q^{m/2} \bmod n \nonumber\\
            &= q^{m/2}-1=\sqrt{\delta}
        \end{align}
        Therefore from \eqref{theore3_eq_8_1} and \eqref{theorem_3_eq_9}, we can conclude that $(\sqrt{\delta},\sqrt{\delta}) \in Z_{des}\cap Z^{-1}$.
        \item When $m$ is odd: consider the point $(\delta,1)$. Since $\delta < d$, 
        \begin{equation}
            (\delta,1) \in Z_{des}
            \quad \&\quad
            \left((-\delta q^l\bmod n),(-q^l\bmod n)\right) \in Z^{-1}
        \end{equation}
        If $l = (m+1)/2$, then we have 
        \begin{subequations}\label{theorem_3_eq_10} 
            \begin{align}
                -\delta q^{(m+1)/2} \bmod  n
                &= -(q^m-1-q^{(m-1)/2})q^{(m+1)/2} \bmod n = 1 \\
                -q^{(m+1)/2}\bmod  n 
                &= (q^m-1-q^{(m+1)/2}) < \delta < d
            \end{align}
        \end{subequations}
                Hence, $(\delta,1) \in Z_{des}\cap Z^{-1}$.
    \end{enumerate}
    In both the above cases, we have $Z_{des}\cap Z^{-1} \neq \emptyset$. This implies, by Lemma~\ref{lemma_2}, $C$ cannot contain $C^\perp$ giving us the desired contradiction.
\end{proof}
Now let see the structural results on the cosets in two dimensions which are used in the above theorem \ref{theorem_3}.

\begin{lemma} \label{apx_lemma_1}
Suppose $n=q^m-1$, $m>3$ and the sets $Z_{des}$ and $Z_{des,k}$ are as follows where $0 \leq k \leq (m-1)/2$. 
\begin{align}
        Z_{des} = \left\{(x,y)\ \middle|\ xy < q^m-1-q^{\floor{m/2}},1\leq x,y < n \right\}\label{apx_lemma_1_eq_1}\\
        Z_{des,k} = \left\{(x,y)\ \middle|\ q^k-1 < x \leq q^{k+1}-1, (x,y) \in Z_{des} \right\}\label{apx_lemma_1_eq_2}
\end{align}

Let $f(x,y,l) := (-xq^l\bmod n)(-yq^l\bmod  n)$, where $ l \in \mathbb{Z}_m$.
\begin{subequations}
\begin{align}
\min\limits_{\substack{(x,y)\in Z_{des,k}\\ l \neq m-k-1}} f(x,y,l) &=  q^m-1-q^{\ceil{m/2}-1}; \quad \text{occurs when } l=\ceil{m/2}-1\label{eq:neq-m-k-1}\\
\min\limits_{\substack{(x,y)\in Z_{des,k}\\ l = m-k-1}} f(x,y,l) &= \begin{cases}
(q^{\floor{m/2}}-1)^2 ;   \quad \text{occurs when }m \text{ is even and } l = m/2\\
\geq \quad q^m-1 ; \quad \text{otherwise}
\end{cases}\label{eq:eq-m-k-1}
\end{align}
\end{subequations}
\end{lemma}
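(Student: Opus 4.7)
The plan is to leverage two elementary facts about $\mathbb{Z}_n$ with $n = q^m - 1$: multiplication by $q^l$ cyclically shifts the $q$-ary digits by $l$ positions (since $q^m \equiv 1 \pmod n$), and for $z \in [1, n-1]$ the digits of $-z \bmod n = n - z$ are the componentwise $(q-1)$-complements of those of $z$. Combining these with the digit-support conclusion already established in Theorem~\ref{theorem_3}---namely that $(x,y) \in Z_{des,k}$ forces $x_i = 0$ for $i > k$ (with $x_k \neq 0$) and $y_j = 0$ for $j > m-1-k$---I would write
\begin{equation*}
-xq^l \bmod n = n - \sum_{i=0}^{k} x_i\, q^{(i+l) \bmod m}, \qquad -yq^l \bmod n = n - \sum_{j=0}^{m-1-k} y_j\, q^{(j+l) \bmod m}.
\end{equation*}
The shifted support of $x$ is a contiguous arc of length $k+1$ on $\mathbb{Z}_m$, and the complement $-xq^l \bmod n$ is ``small'' precisely when this arc occupies the top positions $\{m-1-k,\dots,m-1\}$ with every digit equal to $q-1$.

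The first case to handle is $l = m-1-k$, which aligns the $x$-arc exactly with $\{m-1-k,\dots,m-1\}$ and wraps the $y$-arc onto $\{0,\dots,m-2-2k\} \cup \{m-1-k,\dots,m-1\}$. The individual bounds $-xq^l \bmod n \geq q^{m-1-k}-1$ (with equality at $x=q^{k+1}-1$) and $-yq^l \bmod n \geq q^{m-1-k} - q^{m-1-2k}$ (with equality at $y=q^{m-k}-1$) are each attainable in isolation, but one checks directly that $(q^{k+1}-1)(q^{m-k}-1)$ exceeds $q^m - 1 - q^{\floor{m/2}}$ in every $(m,k)$ except the balanced case $m$ even and $k = m/2 - 1$; in that case, $(x,y) = (q^{m/2}-1, q^{m/2}-1) \in Z_{des, m/2-1}$ is the joint minimizer and plugging in yields $f = (q^{m/2}-1)^2$. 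For every other $(m,k)$, I would combine $ab \geq a + b - 1$ with the digit-level upper bound on $xq^l \bmod n + yq^l \bmod n$ forced by the constraint $xy < q^m - 1 - q^{\floor{m/2}}$ to conclude $f \geq q^m - 1$, exploiting that at least one of the two shifts must be noticeably short of $n$.

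For the second case $l \neq m-1-k$, the $x$-arc necessarily misses at least one of the top positions, so $-xq^l \bmod n$ retains a full $(q-1)\,q^j$ contribution at some $j \in [m-1-k, m-1]$ and is at least of order $q^{m-1}$. The tightest achievable product occurs at $k=0$, $l = \ceil{m/2}-1$, $x=1$, $y = n - q^{\floor{m/2}+1}$: one computes $-xq^l \bmod n = n - q^{\ceil{m/2}-1}$ and $-yq^l \bmod n = 1$ (since the chosen $y$ satisfies $yq^l \equiv -1 \pmod n$ by construction), and verifies $(1, y) \in Z_{des, 0}$ for $m > 3$, yielding $f = q^m - 1 - q^{\ceil{m/2}-1}$. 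To establish the matching lower bound, I would argue case-by-case that for every $(x, y, l)$ with $l \neq m-1-k$, the arc-coverage mismatch for the $x$-factor combined with $-yq^l \bmod n \geq 1$ (refined via the $xy$-constraint when the $x$-factor is close to its minimum) forces the product to be at least $q^m - 1 - q^{\ceil{m/2}-1}$.

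The main obstacle is the joint minimization at $l = m-1-k$: the individual digit-level minima of $-xq^l \bmod n$ and $-yq^l \bmod n$ are simultaneously realizable only in the single balanced configuration, so one must trace how shrinking $x$ tightens the cap on $y$ and vice versa. The cleanest plan is to fix one of $x$, $y$ at its extremal value, derive the forced upper bound on the other from the $xy$-constraint, and use the digit-level lower bound on the corresponding complement to close the gap. The concrete arithmetic verification that $(q^{k+1}-1)(q^{m-k}-1) > q^m - 1 - q^{\floor{m/2}}$ outside the balanced case is the main case-by-case calculation underpinning this step.
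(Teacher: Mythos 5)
Your overall framework (digit-wise $(q-1)$-complement, cyclic shift by $l$, and the case split on $l$ versus $m-k-1$) is the same as the paper's, and your exhibited minimizers are correct: $(1,\,n-q^{\floor{m/2}+1})$ with $l=\ceil{m/2}-1$ does give $q^m-1-q^{\ceil{m/2}-1}$, and $(q^{m/2}-1,\,q^{m/2}-1)$ with $l=m/2$ does give $(q^{m/2}-1)^2$. The gaps are all in the lower-bound direction. First, your triage of the case $l=m-k-1$ rests on a false arithmetic claim: $(q^{k+1}-1)(q^{m-k}-1)=q^{m+1}-q^{k+1}-q^{m-k}+1$ exceeds $q^m-1$ (hence exceeds $q^m-1-q^{\floor{m/2}}$) for \emph{every} $k$ in range, including the balanced case $k=m/2-1$. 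So the individual digit-minima of the two factors are never simultaneously realizable, and "balanced versus unbalanced" cannot be detected this way; what actually distinguishes $k=m/2-1$ is that the constraint-compatible near-minimizers still multiply to something below $q^m-1$. Second, your proposed tool for the remaining $l=m-k-1$ configurations, $ab\ge a+b-1$, cannot close the argument: for large $k$ (e.g.\ $k=(m-1)/2$) both factors are only of order $q^{m-k}\approx q^{(m+1)/2}$, so $a+b\ll q^m$ and the additive bound says nothing; one must bound the \emph{product} directly, which the paper does by showing the $xy$-constraint forces a digit among $\{x_{(m-3)/2},y_{(m-3)/2},y_{(m-1)/2}\}$ (resp.\ $\{x_{m/2-2},x_{m/2-1},y_{m/2},y_{m/2-1}\}$) to vanish and then multiplying the resulting refined lower bounds.

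Third, your treatment of $l\ne m-k-1$ silently assumes the paper's case $l<m-k-1$ and misstates the other one. When $l>m-k-1$ the shifted $x$-support wraps around, the uncovered top position is only some $j$ with $m-k-1\le j\le l-1$, and so $-xq^l\bmod n$ is merely of order $q^{m-k}$, not $q^{m-1}$ as you claim; your "force $-yq^l\bmod n=1$" step then fails because the $x$-factor is no longer within a factor of $2$ of $n$. This wrap-around regime needs its own argument (the paper shows $(-xq^l\bmod n)(-yq^l\bmod n)\ge q^{l-m+k+1}(q^{m-k-1}-1)\cdot q^{l-k}(q^k-1)$, which forces $2l-2<m$ and pins down $l=(m+1)/2$, $k=(m-1)/2$, where a direct check gives a product exceeding $q^m-1$). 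Until you (i) replace the $(q^{k+1}-1)(q^{m-k}-1)$ criterion by a correct identification of when the $xy$-constraint still permits a sub-$q^m$ product, (ii) supply genuine multiplicative lower bounds for the unbalanced $l=m-k-1$ cases, and (iii) handle the wrap-around case $l>m-k-1$ separately, the lower-bound half of both \eqref{eq:neq-m-k-1} and \eqref{eq:eq-m-k-1} is not established.
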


\begin{proof}
Let $(x,y) \in Z_{des,k}$. By equation \eqref{apx_lemma_1_eq_1}, we have 
$xy < q^m-1$, and therefore $x,y$ should have the following $q$-ary expansions.
\begin{equation}\label{apx_lemma_1_eq_5}
    x = \sum\limits_{i=0}^{k} x_i q^i,\ x_k \neq 0 \quad\quad y = \sum\limits_{j=0}^{m-k-1} y_j q^j
\end{equation}
where $0 \leq x_i, y_j \leq q-1$. Note that $y_{m-k-1}$ need not be nonzero here. Correspondingly $(n-x,n-y)$ will have the following $q$-ary form.
\begin{equation}\label{apx_lemma_1_eq_6}
    (n-x) = \sum\limits_{i=0}^{k} (q-1-x_i)q^i + \sum_{i=k+1}^{m-1} (q-1)q^i,\ x_k\neq 0
    \quad \& \quad
    (n-y) = \sum\limits_{j=0}^{m-k-1} (q-1-y_j)q^i + \sum\limits_{j=m-k}^{m-1} (q-1)q^j
\end{equation}
The $q$-ary expansions of $(-xq^l\text{ mod }n)$ and $(-yq^l\text{ mod }n)$ are obtained by taking the $l^{th}$ right circular shift of $q$-ary expansions of $(n-x)$ and $(n-y)$ respectively.

\begin{compactenum}[i)]
    \item \label{apx_lemma_1_case_1} ${0 \leq l < m-k-1:}$
    \begin{equation}\label{apx_lemma_1_eq_7}
        (-xq^l\text{ mod } n) = \left(\sum\limits_{i=0}^{l-1} (q-1)q^i\right) + q^l\left(\sum\limits_{i=0}^{k}(q-1-x_i)q^i\right) + \left(\sum\limits_{i=k+l+1}^{m-1} (q-1)q^i\right)
    \end{equation}    
    When $l<m-k-1$, $(m-1)^{th}$ $q$-ary coefficient of $\left(-xq^l\text{ mod }n\right)$ is equal to $(q-1)$. 
    Hence  $\left(-xq^l\text{ mod }n\right)\geq(q-1)q^{m-1}$
    If $(-yq^l \bmod n)>1$, then 
    \begin{align}
    \left(-xq^l\text{ mod }n\right) (-yq^l \bmod n) &\geq 2(q-1)q^{m-1} \nonumber\\
    &=2(q^m-q^{m-1}) = q^m+ (q^m-2q^{m-1})>q^m-1 
    \end{align}
    If the minimum value of $\left(-xq^l\text{ mod }n\right)\left(-yq^l\text{ mod }n\right)$ is less than $q^m-1$,
    then $\left(-yq^l\text{ mod }n\right)$ must be equal to $1$,  equivalently, $y$ must be equal to $q^m-1-q^{m-l}$.
    Given that $xy < q^m-1-q^{\floor{m/2}}$, $y = q^m-1-q^{m-l}$ implies $l \leq \ceil{m/2}-1$ and $x=1$.
    Under these conditions minimum value of $\left(-xq^l\text{ mod }n\right)\left(-yq^l\text{ mod }n\right)$ occurs when $l = \ceil{m/2}-1$ which is equal to $q^m-1-q^{\ceil{m/2}-1}$.
    \item \label{apx_lemma_1_case_2} $l = m-k-1:$
    \begin{subequations}\label{apx_lemma_1_8}
    \begin{align}
        (-xq^{l}\text{ mod } n) &= \left(\sum\limits_{i=0}^{m-k-2} (q-1)q^i\right) + q^{m-k-1}\left(\sum\limits_{i=0}^{k} (q-1-x_i)q^i\right)\\
        (-yq^{l}\text{ mod } n) &= \left(\sum\limits_{j=0}^{m-2k-2} (q-1-y_{j+k+1})q^j\right) + \left(\sum\limits_{j=m-2k-1}^{m-k-2}(q-1)q^j\right) + q^{m-k-1}\left(\sum\limits_{j=0}^{k} (q-1-y_j)q^j\right)
    \end{align}
    \end{subequations}
    Let us assume that the minimum value of $\left(-xq^l\text{ mod }n\right)\left(-yq^l\text{ mod }n\right)$ is less than $q^m-1$. Let us check if our assumption is a valid one and if so, we will find $(x,y)$ which satisfy our assumption.
    If $k=0$ then $\left(-xq^l\text{ mod }n\right)$ is greater than or equal to $q^{m-1}-1$.
    From Eq.~\eqref{apx_lemma_1_eq_1}, $y_j<(q-1)$ for some $\floor{m/2}\leq j\leq m-1$. Hence $(-yq^l\text{ mod }n)$ is greater than or equal to $q^{\floor{m/2}-1}$. 
    Therefore, the product $\left(-xq^l\text{ mod }n\right)\left(-yq^l\text{ mod }n\right)$ is greater than $q^m-1$. When $k\neq 0$, $\left(-xq^l\text{ mod }n\right)$ is greater than or equal to $q^{m-k-1}-1$
    and $\left(-yq^l\text{ mod }n\right)$ is greater than or equal to $q^{m-2k-1}(q^k-1)$ which implies the product $\left(-xq^l\text{ mod }n\right)\left(-yq^l\text{ mod }n\right)$ is greater than or equal to $(q^{m-k-1}-1)(q^k-1)q^{m-2k-1}$. Combining this with $\left(-xq^l\text{ mod }n\right)\left(-yq^l\text{ mod }n\right) < q^m-1$, we get $k \geq m/2-1$. we also have the inequality $k \leq (m-1)/2$. This implies the possible values of $k$ are $(m-1)/2$ when $m$ is odd and $m/2 -1$ when $m$ is even.
    \begin{compactenum}[(a)]
        \item When ${k = (m-1)/2}$,
        \begin{subequations}\label{apx_lemma_1_eq_10}
        \begin{align}
            (-xq^{l}\text{ mod } n) &= \left(\sum\limits_{i=0}^{(m-3)/2} (q-1)q^i\right) + q^{(m-1)/2}\left(\sum\limits_{i=0}^{(m-1)/2} (q-1-x_i)q^i\right)\\
            (-yq^{l}\text{ mod } n) &= \left(\sum\limits_{j=0}^{(m-3)/2}(q-1)q^j\right) + q^{(m-1)/2}\left(\sum\limits_{j=0}^{(m-1)/2} (q-1-y_j)q^j\right)
        \end{align}
        \end{subequations}
        Since $xy$ is less than $(q^m-1-q^{\floor{m/2}})$, at least on of the $\left\{x_{(m-3)/2},\ y_{(m-3)/2},\ y_{(m-1)/2}\right\}$ must be equal to zero. Otherwise $xy$ will be greater than $q^m-1$.
        This implies either $(-xq^l\text{ mod }n)$ or $(-yq^l\text{ mod }n)$ must be greater than or equal to $q^{(m-1)/2}-1+q^{m-2}$ and the product $\left\{(-xq^l\text{ mod }n)(-yq^l\text{ mod }n)\right\} \geq (q^{(m-1)/2}-1)(q^{(m-1)/2}-1+q^{m-2})$ cannot be less than $q^m-1$. 
        
        \item When ${k = m/2 -1}$,
        \begin{subequations}
        \begin{align}
            (-xq^{l}\text{ mod } n) &= \left(\sum\limits_{i=0}^{m/2-1} (q-1)q^i\right) + q^{m/2}\left(\sum\limits_{i=0}^{m/2-1} (q-1-x_i)q^i\right)\\
            (-yq^{l}\text{ mod } n) &= \left(q-1-y_{m/2}\right) + \left(\sum\limits_{j=1}^{m/2-1}(q-1)q^j\right) + q^{m/2}\left(\sum\limits_{j=0}^{m/2-1} (q-1-y_j)q^j\right)
        \end{align}
        \end{subequations}
        Since $xy$ is less than $(q^m-1-q^{\floor{m/2}})$, at least on of the $\left\{x_{m/2-2},\ x_{m/2-1},\ y_{m/2},\ y_{m/2-1}\right\}$ must be equal to zero. Therefore, from the above equations, minimum value of  $\left(-xq^l\text{ mod }n\right)\left(-yq^l\text{ mod }n\right)$ occurs when $(y_{m/2}) = 0$, and the minimum value is equal to $(q^{m/2}-1)^2$
    \end{compactenum}
    
    \item \label{apx_lemma_1_case_3} $m-k-1 < l \leq m-1:$
    \begin{subequations}\label{apx_lemma_1_eq_12}
    \begin{align}
        (-xq^{l}\text{ mod } n) &= \left(\sum\limits_{i=0}^{l-(m-k)} (q-1-x_{i+m-l})q^i\right) + \left(\sum\limits_{i=l-(m-k-1)}^{l-1} (q-1)q^i\right) + q^l\left(\sum\limits_{i=0}^{m-l-1} (q-1-x_i)q^i\right)\\
        (-yq^{l}\text{ mod } n) &= \left(\sum\limits_{j=0}^{l-k-1} (q-1-y_{j+m-l})q^j\right) + \left(\sum\limits_{j=l-k}^{l-1}(q-1)q^j\right) + q^l\left(\sum\limits_{j=0}^{m-l-1} (q-1-y_j)q^j\right)
    \end{align}
    \end{subequations}
   
   Assume that the minimum value of $\left(-xq^l\text{ mod }n\right)\left(-yq^l\text{ mod }n\right)$ is less than $q^m-1$. Let us check if our assumption is a valid one. From equation \eqref{apx_lemma_1_eq_12}, we can say that $(-xq^l\text{ mod }n) \geq q^{l-m+k+1}(q^{m-k-1}-1)$ and $(-yq^l\text{ mod }n) \geq q^{l-k}(q^k-1)$. This, together with our assumption, implies $2l-2$ must be less than $m$. 
   Additionally, the inequalities $m-k-1 < l$ and $k \leq (m-1)/2$ imply
   $l = (m+1)/2$ and $k = (m-1)/2$. Subsequently the following inequality holds true.
    \begin{align}\label{apx_lemma_1_eq_13}
        (-xq^l\text{ mod }n)(-yq^l\text{ mod }n) &\geq q^2\left(q^{(m-1)/2}-1\right)^2 \nonumber\\
        &> q^m-1
    \end{align}    
    Therefore our assumption is not a valid one and the minimum value of $\left(-xq^l\text{ mod }n\right)\left(-yq^l\text{ mod }n\right)$ cannot be less than $q^m-1$.
\end{compactenum}
The cases \ref{apx_lemma_1_case_1} and \ref{apx_lemma_1_case_3} imply Eq. \eqref{eq:neq-m-k-1} and the case \ref{apx_lemma_1_case_2} implies Eq. \eqref{eq:eq-m-k-1}.
\end{proof}

We can extend theorem \ref{theorem_3} to the case of non-primitive narrow-sense bicyclic hyperbolic codes. The following corollary gives a sufficiency condition to verify if a narrow-sense bicyclic hyperbolic code contains its Euclidean dual code.

\begin{corollary}\label{theorem_7}
Suppose $m = ord_n(q)$. Narrow-sense bicyclic hyperbolic code of length $n\times n$ over $\mathbb{F}_{q^2}$ contains its Euclidean dual if the design distance $d$ satisfies $2 \leq d \leq \Delta$, where
\begin{equation}
        \Delta = 
    \begin{cases}
        \Big(\frac{n^2}{(q^m-1)^2}\Big)\Big[(q^m-1) - 2(q^{m/2} - 1)\Big] & m \text{ is even} \vspace{10pt}\\
        \Big(\frac{n^2}{(q^m-1)^2}\Big)\Big[(q^m-1) - (q^{\frac{m-1}{2}})\Big] & m \text{ is odd}
    \end{cases}
\end{equation}
\end{corollary}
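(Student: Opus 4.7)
The plan is to reduce to the primitive case via a scaling argument. Let $N = q^m-1$ and $\lambda = N/n$; since $m = \mathrm{ord}_n(q)$, $\lambda$ is a positive integer. A primitive $n$-th root of unity $\alpha \in \mathbb{F}_{q^m}$ satisfies $\alpha = \beta^\lambda$ for some primitive $N$-th root of unity $\beta$, so I would use the map
\begin{equation*}
\varphi\colon \mathbb{Z}_n \times \mathbb{Z}_n \to \mathbb{Z}_N\times \mathbb{Z}_N,\qquad (x,y)\mapsto(\lambda x,\lambda y),
\end{equation*}
to transport the common-zero data of the non-primitive code $C=\mathscr{H}(n\times n,q;d)$ into that of a larger primitive code.

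I would then compare $C$ with the primitive code $D=\mathscr{H}(N\times N,q;d')$, where $d'=\lambda^2 d$. Writing $\delta$ for the threshold appearing in Theorem~\ref{theorem_3}, the hypothesis $d\leq \Delta = \delta/\lambda^2$ is exactly $d'\leq \delta$, so Theorem~\ref{theorem_3} gives $D\supseteq D^\perp$. Via Lemma~\ref{lemma_2}, together with the key inequality \eqref{theorem_3_eq_3} explicitly verified in the proof of Theorem~\ref{theorem_3}, this yields
\begin{equation*}
(-u q^l\bmod N)(-v q^l\bmod N)\geq d' \quad \text{for all } 1\leq u,v\leq N \text{ with } uv<d' \text{ and all } l\in\mathbb{Z}_m,
\end{equation*}
which is what I would extract from the primitive case.

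To pull this bound back to $C$, I would first record the elementary identity
\begin{equation*}
-\lambda x q^l \bmod N \;=\; \lambda\cdot(-x q^l\bmod n),
\end{equation*}
which holds because $N=\lambda n$ and $0\leq -xq^l\bmod n<n$ forces the right-hand side to lie in $[0,N)$. Given any $(x,y)$ with $1\leq x,y\leq n$ and $xy<d$, the image $\varphi(x,y)=(\lambda x,\lambda y)$ satisfies $(\lambda x)(\lambda y)<d'$ and so qualifies for the primitive inequality. Applying the identity coordinate-wise and dividing by $\lambda^2$ would give $(-xq^l\bmod n)(-yq^l\bmod n)\geq d$ for every such $(x,y)$ and every $l\in\mathbb{Z}_m$. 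By Lemma~\ref{lemma_2} this is precisely the condition $Z_{des}\cap Z^{-1}=\emptyset$ for $C$, yielding $C\supseteq C^\perp$.

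I do not foresee a serious obstacle here. The only delicate point is confirming that the scalar-$\lambda$ identity on residues is an equality rather than a congruence off by a multiple of $N$, but $\lambda\mid N$ together with the standard residue bound makes this immediate. The even/odd parity split in the definition of $\Delta$ and the hypothesis $m>3$ are inherited transparently from Theorem~\ref{theorem_3} through the substitution $d'=\lambda^2 d$.
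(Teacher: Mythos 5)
Your proposal is correct and follows essentially the same route as the paper: both reduce to the primitive case by the scaling map $(x,y)\mapsto(\lambda x,\lambda y)$ with $\lambda=(q^m-1)/n$, use the identity $-\lambda x q^l \bmod (q^m-1)=\lambda\,(-xq^l\bmod n)$, and pull the product inequality of Theorem~\ref{theorem_3} back through division by $\lambda^2$. The only cosmetic difference is that you phrase the comparison via a primitive code of design distance $d'=\lambda^2 d$, while the paper works directly with the threshold code $\mathscr{H}(n\times n,q;\Delta)$; the substance is identical.
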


\begin{proof}
Let $C= \mathscr{H}(n\times n, q;d)$ be a narrow-sense bicyclic hyperbolic code with designed distance $d$. For proving $C \supseteq C^\perp$ when $d \leq \Delta$, it is  enough to show that
$D=\mathscr{H}(n\times n, q;\Delta)$ contains $D^\perp$.
Since $D\subseteq C$, this implies $C^\perp \subseteq C$. Let $\overline{Z}$ be the defining set of $D$. From lemma \ref{lemma_2}, our goal is to prove that
\begin{equation}\label{theorem_7_eq_1}
    (-\overline{x}q^l\text{ mod }n)(-\overline{y}q^l\text{ mod }n) \geq \Delta \quad \quad \forall\ (\overline{x},\overline{y})\in \overline{Z}_{des},\ l\in\{0,1,...,m-1\}
\end{equation}
Let $(\overline{x},\overline{y}) \in \overline{Z}_{des}$. Since $0\leq \overline{x}<n$ and $0\leq \overline{y}<n$, they can be written in the following form
\begin{subequations}\label{theorem_7_eq_2}
\begin{align}
    \overline{x} &= \frac{n}{q^m-1}\Big(\frac{q^m-1}{n}(\overline{x})\Big) = \frac{n}{q^m-1}(x) \\
    \overline{y} &= \frac{n}{q^m-1}\Big(\frac{q^m-1}{n}(\overline{y})\Big) = \frac{n}{q^m-1}(y) \numberthis
\end{align}
\end{subequations}
where $0\leq x < q^m-1$,  $0\leq y < q^m-1$. Correspondingly
\begin{subequations}
\begin{align}
    (-\overline{x}q^l\text{ mod }n) &= \left(\frac{n}{q^m-1}\right)(-xq^l\text{ mod }(q^m-1)) \\
    (-\overline{y}q^l\text{ mod }n) &= \left(\frac{n}{q^m-1}\right)(-yq^l\text{ mod }(q^m-1))
\end{align}
\end{subequations}
From above equations, we can say that $\overline{x}\overline{y} < \Delta$ implies $xy < \delta$, where $\delta$ is from theorem \ref{theorem_3} and $(-xq^l\text{ mod }(q^m-1))(yq^l\text{ mod }(q^m-1)) \geq \delta$ implies  $(-\overline{x}q^l\text{ mod }n)(-\overline{y}q^l\text{ mod }n) \geq \Delta$. Therefore from theorem \ref{theorem_3}, we can say that equation \eqref{theorem_7_eq_1} is true.
\end{proof}

Now that we have a necessary and sufficient condition for narrow-sense primitive bicyclic hyperbolic code and a sufficiency condition for non-primitive narrow-sense bicyclic hyperbolic code to contain their corresponding Euclidean dual codes, the CSS construction enables us to construct quantum stabilizer codes.

\begin{proposition}[Calderbank-Shor-Steane (CSS) construction,\cite{calderbank98}]\label{prop:css}
If there exists an $[n,k,d]$ Euclidean dual containing classical linear code $C$ over $\mathbb{F}_q$, then there exists an $[[n,2k-n,d]]$ stabilizer code over $\mathbb{F}_q$.
\end{proposition}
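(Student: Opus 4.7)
The plan is to build a stabilizer code from $C$ via the standard CSS recipe and then verify its parameters. First I would fix a parity check matrix $H$ of $C$; its $n-k$ rows form a basis of $C^{\perp}$. Because $C^{\perp}\subseteq C$, every row of $H$ lies in $C$ and is therefore Euclidean-orthogonal to every other row of $H$, i.e., $HH^{T}=0$. For each row $h$ of $H$ I then introduce two stabilizer generators, one of $X$-type of the form $X^{h}=X^{h_1}\otimes\cdots\otimes X^{h_n}$ and one of $Z$-type of the form $Z^{h}$, where $X$ and $Z$ are the standard generalized Pauli operators over $\mathbb{F}_q$.

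The only nontrivial commutation check is between operators of the form $X^{a}$ and $Z^{b}$, which commute up to a phase determined by $a\cdot b$; commutativity of the full group $S$ therefore reduces exactly to $HH^{T}=0$, which is the hypothesis. Hence $S$ is an abelian subgroup of the Pauli group with $2(n-k)$ independent generators, so its joint $+1$ eigenspace has dimension $q^{n}/q^{2(n-k)}=q^{2k-n}$, encoding $2k-n$ logical $q$-ary qudits. For the distance, an error is undetectable precisely when it commutes with every element of $S$ but does not itself lie in $S$. Using the standard identification of the Pauli centralizer of $S$ with pairs of vectors in $C$, the minimum weight of such a nontrivial logical operator equals $\min\{\mathrm{wt}(c):c\in C\setminus C^{\perp}\}$. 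Since $C^{\perp}\subseteq C$, this is at least the classical minimum distance $d$ of $C$, giving an $[[n,2k-n,d]]$ stabilizer code over $\mathbb{F}_q$.

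The main obstacle is keeping the argument uniform over arbitrary prime powers $q$. For $q=p$ prime, the generalized $X$ and $Z$ act on $\mathbb{F}_q$ with commutation phase $e^{2\pi i/p}$ and the above computations are immediate. For $q=p^{s}$ with $s>1$ one must replace the Euclidean form appearing in the commutator by its trace to $\mathbb{F}_p$ and work in a trace-dual basis of $\mathbb{F}_q$ over $\mathbb{F}_p$; the hypothesis $C^{\perp}\subseteq C$ still forces all of the relevant operators to commute, but translating between the $\mathbb{F}_p$-valued symplectic form and the $\mathbb{F}_q$-valued Euclidean form requires slightly more delicate bookkeeping than in the prime case. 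Once that identification is pinned down, the dimension and distance counts above go through verbatim.
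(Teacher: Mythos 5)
The paper does not prove this proposition at all: it is imported verbatim from the literature with the citation to Calderbank--Shor--Steane, so there is no in-paper argument to compare against. Your sketch is the standard CSS construction and is essentially correct: $HH^{T}=0$ follows from $C^{\perp}\subseteq C$, the centralizer-modulo-stabilizer analysis gives the distance bound via $\min\{\mathrm{wt}(c):c\in C\setminus C^{\perp}\}\geq d$, and you correctly flag the only delicate point, namely the non-prime case $q=p^{s}$, where the commutation phase is governed by $\mathrm{tr}_{\mathbb{F}_q/\mathbb{F}_p}(a\cdot b)$ rather than $a\cdot b$ itself (for $\mathbb{F}_q$-linear codes the trace dual coincides with the Euclidean dual, so the hypothesis still suffices). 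One small bookkeeping caveat in the same vein: for $q=p^{s}$ the operators $X^{h}$, $Z^{h}$ have order $p$, so to cut the dimension by the full factor $q^{2(n-k)}$ you must include all $\mathbb{F}_q$-multiples $X^{\alpha h}$, $Z^{\alpha h}$ ($\alpha\in\mathbb{F}_q$), i.e.\ $2s(n-k)$ independent generators over $\mathbb{F}_p$; with that understood, your dimension count $q^{n}/q^{2(n-k)}=q^{2k-n}$ is right.
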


\begin{corollary}[Quantum bicylic codes I]
Let $q$ be a prime power and $n=q^m-1$  for $m>3$ and $d< \delta$ as in Theorem~\ref{theorem_3}.
Then there exists a quantum bicyclic code of length $n^2$ and distance $\geq d$.
\end{corollary}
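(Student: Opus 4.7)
The plan is to combine Theorem~\ref{theorem_3} with the CSS construction of Proposition~\ref{prop:css}. The natural candidate is $C = \mathscr{H}(n\times n, q; d)$ itself, viewed as a classical linear code over $\mathbb{F}_q$ of length $n^2$ once the $n\times n$ codeword arrays are flattened into vectors of length $n^2$. This identification is standard for bicyclic codes and is already implicit in the inner product \eqref{eq:ip} that defines the Euclidean dual.

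First, I would verify the hypotheses of Theorem~\ref{theorem_3}: we have $n = q^m - 1$ with $m > 3$, and by assumption $d < \delta$, so in particular $d \leq \delta$. Theorem~\ref{theorem_3} therefore yields $C^\perp \subseteq C$, i.e.\ $C$ is Euclidean dual-containing. Second, by the definition of a bicyclic hyperbolic code with designed distance $d$, the minimum distance of $C$ is at least $d$. Letting $k = \dim_{\mathbb{F}_q} C$, the code $C$ is thus an $[n^2, k, \geq d]_q$ Euclidean dual-containing linear code.

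Finally, I would invoke Proposition~\ref{prop:css} on $C$ to obtain a stabilizer code with parameters $[[n^2,\, 2k - n^2,\, \geq d]]_q$. Since $C$ is bicyclic, the resulting quantum code inherits a natural two-dimensional cyclic structure from $C$ and may legitimately be called a quantum bicyclic code of length $n^2$ and minimum distance at least $d$, which is precisely what the corollary asserts.

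There is essentially no obstacle in this argument: the entire content of the corollary has been packaged into Theorem~\ref{theorem_3} and Proposition~\ref{prop:css}, and the proof is a two-line bookkeeping exercise. The only minor point worth being explicit about is the matching of hypotheses ($d < \delta$ in the corollary implies the $d \leq \delta$ required by Theorem~\ref{theorem_3}) and the interpretation of length, namely that an $n\times n$ bicyclic code yields $n^2$ physical qudits after flattening.
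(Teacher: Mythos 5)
Your proposal is correct and is exactly the argument the paper intends: the corollary is stated immediately after Proposition~\ref{prop:css} precisely so that it follows by combining Theorem~\ref{theorem_3} (dual containment for $d\leq\delta$) with the CSS construction applied to the flattened $[n^2,k,\geq d]_q$ code. No gaps; your explicit remarks on hypothesis matching and on flattening the $n\times n$ array to length $n^2$ are the only details worth spelling out, and you have them.
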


\section{Hermitian Dual Containing Codes}\label{sec:main-hd}

Let $u\in\mathbb{F}_{q^2}^{n_1\times n_2}$, we define  $u^q = (u_{ij}^q)$.
Suppose $C$ is a linear code of size $n_1\times n_2$ over $\mathbb{F}_{q^2}$. 
Then $C^{\perp_h}$, the Hermitian dual code  of $C$ is defined as 
\begin{align}
C^{\perp_h} = \left\{u\in \mathbb{F}^{n_1\times n_2}_{q}\ \middle|\ u^{q}\cdot c = 0\  \mbox{ for all }\  c \in C\right\}.\label{eq:h-dual}
\end{align} 
From Eq.~\eqref{eq:h-dual}, we can see that the following relations hold between $C^\perp$ and $C^{\perp_h}$.
\begin{subequations}\label{eq:dual-reln}
\begin{align}
C^{\perp} =&\left\{u^q\ \middle|\ u\in C^{\perp_h}  \right\} \label{eq:dual-reln-h2e}\\
C^{\perp_h} =&\left\{u^q\ \middle|\ u\in C^\perp  \right\} \label{eq:dual-reln-e2h}
\end{align}
\end{subequations}
If $c(X,Y)$ is a code polynomial in $C^\perp$, then 
$c(X,Y)^q$ is a code polynomial in $C^{\perp_h}$.
This implies that if $(x,y)$ is a common zero of $C^\perp$,
then $(x^{q},y^{q})$ is a common zero of $C^{\perp_h}$.
Suppose $Z$ is the defining set of bicyclic code $C$ of length $n_1\times n_2$ over the field $\mathbb{F}_{q^2}$ and gcd$(n_1n_2,q)=1$, then Eqs.~\eqref{eq:z-dual}~and~\eqref{eq:dual-reln} imply that the defining set of $C^{\perp_h}$ is
\begin{align}
Z^{\perp_h}=Z_{tot} \setminus Z^{-q}, \label{eq:z-hdual}
\end{align}
where $Z^{-q} = \left\{ (-xq \bmod n_1,-yq \bmod n_2) \middle|\ (x,y)\in Z\right\}$. 
Analogous to Lemma~\ref{lemma_2}, the following lemma gives a simple condition to check if a bicyclic code contains its Hermitian dual code. 


\begin{lemma}\label{lemma_4}
Let $\text{gcd}(n_1n_2,q) = 1$ and $C$ be a bicyclic code of length $n_1\times n_2$ over $\mathbb{F}_{q^2}$. Let $Z$ be the defining set of $C$ and $Z_{des}$
the designed set. Then $ C^{\perp_h} \subseteq C$ 
if and only if 
\begin{equation}\label{lemma_4_eq_1}
    Z_{des}\cap Z^{-q} = \emptyset
\end{equation}
where $Z^{-q} = \left\{ (-qx \bmod n_1, -qy \bmod n_2)\middle|\ (x,y)\in Z\right\}$.
\end{lemma}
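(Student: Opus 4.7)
The plan is to adapt the proof of Lemma~\ref{lemma_2} to the Hermitian setting, replacing the Euclidean map $(x,y)\mapsto(-x,-y)$ by the Hermitian map $(x,y)\mapsto(-qx,-qy)$ everywhere. The one conceptual change is that, because the base field is now $\mathbb{F}_{q^2}$, the cyclotomic cosets organizing the defining set are $q^2$-ary rather than $q$-ary, and this fact must be used consistently.

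First I would record a ``Hermitian analogue'' of Lemma~\ref{lm:dual-containing}: $C^{\perp_h}\subseteq C$ if and only if $Z\cap Z^{-q}=\emptyset$. This is immediate from Eq.~\eqref{eq:z-hdual}, since $C^{\perp_h}\subseteq C$ is equivalent to the reverse containment of defining sets $Z\subseteq Z^{\perp_h}=Z_{tot}\setminus Z^{-q}$, and that containment is just the disjointness $Z\cap Z^{-q}=\emptyset$.

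The remaining task is to reduce the test from $Z$ to $Z_{des}$. The forward implication is trivial, since $Z_{des}\subseteq Z$ forces $Z_{des}\cap Z^{-q}\subseteq Z\cap Z^{-q}=\emptyset$. For the converse, the key observation is that $Z^{-q}$ is itself a union of $q^2$-ary cyclotomic cosets. Indeed, because $C$ is a code over $\mathbb{F}_{q^2}$, the defining set $Z$ is closed under $(u,v)\mapsto(q^2 u\bmod n_1,\; q^2 v\bmod n_2)$; applying $(u,v)\mapsto(-qu,-qv)$ to both sides shows that $Z^{-q}$ inherits this closure. Therefore, for any $(x,y)\in Z_{des}$, if some $(xq^{2k}\bmod n_1,\, yq^{2k}\bmod n_2)$ lay in $Z^{-q}$, then iterating the Frobenius — which is invertible modulo $n_1 n_2$ thanks to $\gcd(n_1n_2,q)=1$ — would force $(x,y)\in Z^{-q}$, contradicting Eq.~\eqref{lemma_4_eq_1}. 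Hence $Coset_{(x,y)}\cap Z^{-q}=\emptyset$ for every $(x,y)\in Z_{des}$, and since $Z=\bigcup_{(x,y)\in Z_{des}}Coset_{(x,y)}$ we conclude $Z\cap Z^{-q}=\emptyset$, as desired.

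I do not anticipate a real obstacle: the only point needing care is using the Frobenius by $q^2$ (not $q$) when establishing closure of $Z^{-q}$ and of the ambient cosets. Once that bookkeeping is correct, the argument is a line-by-line translation of Lemma~\ref{lemma_2}.
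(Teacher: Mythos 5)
Your proof is correct and takes essentially the same approach as the paper's: it reduces Hermitian dual containment to $Z\cap Z^{-q}=\emptyset$ via Eq.~\eqref{eq:z-hdual} and then passes from $Z$ to $Z_{des}$ using the fact that both $Z$ and $Z^{-q}$ are unions of $q^2$-ary cyclotomic cosets, each coset of $Z$ meeting $Z_{des}$. You are in fact a bit more explicit than the paper in verifying that $Z^{-q}$ is itself closed under multiplication by $q^2$, a point the paper leaves implicit.
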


\begin{proof}
If $Z$ is the defining set of $C$, then the defining set of $C^{\perp_h}$ is $Z_{tot} \setminus Z^{-q}$. 
Therefore, $C^{\perp_h}$ is contained in $C$ if and only if the defining set of $C$ is contained in defining  set of $C^{\perp_h}$.
\begin{equation}\label{lemma_4_eq_2}
    Z \subseteq Z_{tot}\setminus Z^{-q} \iff Z \cap Z^{-q} = \emptyset
\end{equation}
Since $Z$ is a union of cyclotomic cosets, 
the condition $Z\cap Z^{-q} = \emptyset$ is true if and only if there is no common coset between $Z$ and $Z^{-q}$. By Eq.~\eqref{eq:bch-Z} every 
cyclotomic coset in $Z$ contains at least one element from $Z_{des}$. Therefore, $Z\cap Z^{-q} = \emptyset$ is true if and only if $Z_{des}\cap Z^{-q} = \emptyset$.
\end{proof}

Extending lemma \ref{apx_lemma_1}, the following lemma about the structural results on cyclotomic cosets will be useful in proving results for Hermitian dual containing codes. The proof is very much similar to the proof of lemma \ref{apx_lemma_1} with the following key differences.
\begin{itemize}
    \item Range of $x$ and $y$ will be from $0$ to $q^{2m}-1$ and therefore there will be $2m$ $q$-ary coefficients for $x$ and $y$ compared to $m$ $q$-ary coefficients in Euclidean case.
    \item Since the code is over field $\mathbb{F}_{q^2}$, we need to consider $q^2$-ary cyclotomic coset instead of $q$-ary cyclotomic coset. $q^2$-ary cyclotomic coset of $(x,y)$ is $\left\{(xq^{l},yq^{l})\ \middle|\ l \in\left\{2,4...2m\right\}\right\}$.
\end{itemize}
\begin{lemma} \label{apx_corollary_2}
Suppose $n=q^{2m}-1$, $m>3$ and $Z_{des}$ and $Z_{des,k}$ as follows,
where $k \leq (2m-1)/2$
\begin{align}
        Z_{des} = \left\{(x,y)\ \middle|\ xy < q^{2m}-1-q^{m-1},1\leq x,y \leq n \right\}\label{apx_corollary_2_eq_1} \\
        Z_{des,k} = \left\{(x,y)\ \middle|\ q^k-1 < x \leq q^{k+1}-1, (x,y) \in Z_{des} \right\}\label{apx_corollary_2_eq_2}
\end{align}
Define $f(x,y,l):=(-xq^l  \bmod n) (-yq^l ,  \bmod n)$ where
$l \in \{1,3,5...2m-1\}$. Then
\begin{align}
        \min\limits_{\substack{(x,y)\in Z_{des,k}\\ l \neq 2m-k-1}} f(x,y,l) &=  
        \begin{cases}
        q^{2m}-1-q^{m-1}; \quad \text{occurs when } m \text{ is even and } l = m-1\\
        q^{2m}-1-q^{m}; \quad \text{occurs when } m \text{ is odd and } l = m\\
        \end{cases} \label{apx_corollary_2_eq_3}\\
            \min\limits_{\substack{(x,y)\in Z_{des,k}\\ l = 2m-k-1}} f(x,y,l)&=
            \begin{cases}
             (q^{m}-1)^2; \quad \text{occurs when }m \text{ is odd and } l=m\\
             \geq q^{2m}-1; \quad \text{ otherwise}
            \end{cases}   \label{apx_corollary_2_eq_4}
\end{align}
\end{lemma}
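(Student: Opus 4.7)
The plan is to mirror the case analysis of Lemma~\ref{apx_lemma_1} almost verbatim, accounting for the two differences flagged by the authors: the modulus is now $n=q^{2m}-1$, so $q$-ary expansions of $x,y$ have up to $2m$ digits; and the shift index $l$ ranges over odd values in $\{1,3,\dots,2m-1\}$, because $Z$ is a union of $q^2$-ary cyclotomic cosets and we are examining $Z^{-q}$ rather than $Z^{-1}$.

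First I would fix $(x,y)\in Z_{des,k}$ with $x_k\neq 0$ and deduce from $xy<q^{2m}-1$ that $y_j=0$ for $j>2m-k-1$. Writing out the $q$-ary expansions of $n-x$ and $n-y$, the values $-xq^l\bmod n$ and $-yq^l\bmod n$ are obtained by left circular shifts of these expansions by $l$ positions. I would then split into the three subcases $0\le l<2m-k-1$, $l=2m-k-1$, and $2m-k-1<l\le 2m-1$, paralleling cases (i)--(iii) of Lemma~\ref{apx_lemma_1}.

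In the first subcase, the top $q$-ary coefficient of $-xq^l\bmod n$ equals $q-1$, so the product drops below $q^{2m}-1$ only when $-yq^l\bmod n=1$, which forces $y=q^{2m}-1-q^{2m-l}$ and $x=1$. The hyperbolic constraint $xy<q^{2m}-1-q^{m-1}$ then forces $l\le m$, and minimizing $q^{2m}-1-q^l$ over odd $l$ in this range gives $l=m-1$ when $m$ is even (value $q^{2m}-1-q^{m-1}$) and $l=m$ when $m$ is odd (value $q^{2m}-1-q^{m}$), reproducing \eqref{apx_corollary_2_eq_3}. The third subcase, treated by the same digit-level lower bounds as in Lemma~\ref{apx_lemma_1}, yields a product at least on the order of $q^{2}(q^{(2m-1)/2}-1)^2>q^{2m}-1$ and therefore does not lower the minimum.

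The second subcase, $l=2m-k-1$, is where the delicate bookkeeping lives and where I expect the main obstacle. Following Lemma~\ref{apx_lemma_1}(ii), one first shows that if the product drops below $q^{2m}-1$ then $k\ge m-1$, so combining with $k\le (2m-1)/2$ pins $k=m-1$. The parity requirement that $l=2m-k-1=m$ be odd then forces $m$ to be odd, and a digit-by-digit expansion of $-xq^m\bmod n$ and $-yq^m\bmod n$, together with the hyperbolic inequality, pins the minimum at $(q^m-1)^2$, attained at $x=y=q^m-1$, giving \eqref{apx_corollary_2_eq_4}. When $m$ is even, the forced $l=m$ is not odd, so no admissible $l$ realises a value below $q^{2m}-1$ and the ``otherwise'' branch of \eqref{apx_corollary_2_eq_4} holds. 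The recurring difficulty is precisely this parity bookkeeping: in each branch one must check that the candidate minimizing $l$ is odd and inside the allowed range before reading off the numerical value, and the split in \eqref{apx_corollary_2_eq_3}--\eqref{apx_corollary_2_eq_4} between even and odd $m$ is exactly the record of this verification.
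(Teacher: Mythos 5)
Your proposal is correct and follows essentially the same route as the paper: the same three-way split on $l$ versus $2m-k-1$, the same digit-level lower bounds forcing $x=1$, $y=q^{2m}-1-q^{2m-l}$ in the first subcase and $k=m-1$ in the second, and the same parity observation that $l=2m-k-1=m$ is admissible only for odd $m$, which is exactly how the paper derives the even/odd split in \eqref{apx_corollary_2_eq_3}--\eqref{apx_corollary_2_eq_4}. The only cosmetic difference is in the third subcase, where the paper notes that no pair $(l,k)$ simultaneously satisfies $2l-2<2m$, $l>2m-k-1$ and $k\leq(2m-1)/2$, rather than exhibiting a numerical lower bound, but the conclusion is identical.
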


\begin{proof}[Proof of Lemma~\ref{apx_corollary_2}]
Let $(x,y) \in Z_{des,k}$. By Eq.~\eqref{apx_corollary_2_eq_1}, we also have 
$xy < q^{2m}-1$, and $(x,y)$ should have the following $q$-ary expansions.
\begin{equation}
    x = \sum\limits_{i=0}^{k} x_i q^i,\ x_k \neq 0 \quad\quad y = \sum\limits_{j=0}^{2m-k-1} y_j q^j
\end{equation}
where $0 \leq x_i, y_j \leq q-1$. Note that $y_{m-k-1}$ need not be nonzero here. Correspondingly $(n-x,n-y)$ will have the following $q$-ary form.
\begin{equation}
    (n-x) = \sum\limits_{i=0}^{k} (q-1-x_i)q^i + \sum_{i=k+1}^{2m-1} (q-1)q^i,\ x_k\neq 0
    \quad \& \quad
    (n-y) = \sum\limits_{j=0}^{2m-k-1} (q-1-y_j)q^i + \sum\limits_{j=2m-k}^{2m-1} (q-1)q^j
\end{equation}
The $q$-ary expansions of $(-xq^l\text{ mod }n)$ and $(-yq^l\text{ mod }n)$ are obtained by taking the $l^{th}$ right circular shift of $q$-ary expansions of $(n-x)$ and $(n-y)$ respectively.

\begin{compactenum}[i)]
    \item ${0 \leq l < 2m-k-1:}$
    \begin{equation}
        (-xq^l\text{ mod } n) = \left(\sum\limits_{i=0}^{l-1} (q-1)q^i\right) + q^l\left(\sum\limits_{i=0}^{k}(q-1-x_i)q^i\right) + \left(\sum\limits_{i=k+l+1}^{2m-1} (q-1)q^i\right)
    \end{equation}    
    When $l<2m-k-1$, $(2m-1)^{th}$ $q$-ary coefficient of $\left(-xq^l\text{ mod }n\right)$ is equal to $(q-1)$. 
    Hence  $\left(-xq^l\text{ mod }n\right)\geq(q-1)q^{2m-1}$
    and if $(-yq^l \bmod n)>1$ then 
    \begin{align}
    \left(-xq^l\text{ mod }n\right) (-yq^l \bmod n) &\geq 2(q-1)q^{2m-1} \\
    &=2(q^{2m}-q^{2m-1}) = q^{2m}+ (q^{2m}-2q^{2m-1})>q^{2m}-1 
    \end{align}
    If the minimum value of $\left(-xq^l\text{ mod }n\right)\left(-yq^l\text{ mod }n\right)$ is less than $q^{2m}-1$,
    then $\left(-yq^l\text{ mod }n\right)$ must be equal to $1$,  equivalently, $y$ must be equal to $q^{2m}-1-q^{2m-l}$.
    Given that $xy < q^{2m}-1-q^{m-1}$, $y = q^{2m}-1-q^{2m-l}$ implies $l \leq m$ and $x=1$.
    Under these conditions minimum value of $\left(-xq^l\text{ mod }n\right)\left(-yq^l\text{ mod }n\right)$ occurs when $l = m$ which is equal to $q^{2m}-1-q^{m}$ if $m$ is odd or when $l=m-1$ which is equal to $q^{2m}-1-q^{m-1}$ if $m$ is even.
    \item $l = 2m-k-1:$
    \begin{subequations}
    \begin{align}
        (-xq^{l}\text{ mod } n) &= \left(\sum\limits_{i=0}^{2m-k-2} (q-1)q^i\right) + q^{2m-k-1}\left(\sum\limits_{i=0}^{k} (q-1-x_i)q^i\right)
        \label{eq:-xql}\\
        (-yq^{l}\text{ mod } n) &= \left(\sum\limits_{j=0}^{2m-2k-2} (q-1-y_{j+k+1})q^j\right) + \left(\sum\limits_{j=2m-2k-1}^{2m-k-2}(q-1)q^j\right) + q^{2m-k-1}\left(\sum\limits_{j=0}^{k} (q-1-y_j)q^j\right)\label{eq:-yql}
    \end{align}
    \end{subequations}
    Let us assume that the minimum value of $\left(-xq^l\text{ mod }n\right)\left(-yq^l\text{ mod }n\right)$ is less than $q^{2m}-1$. Let us check if our assumption is a valid one and if so, we will find $(x,y)$ which satisfy our assumption.
    If $k=0$, then by Eq.~\eqref{eq:-xql}, $\left(-xq^l\text{ mod }n\right)$ is greater than or equal to $q^{2m-1}-1$.
    From Eq.~\eqref{apx_corollary_2_eq_1}, $y_j<(q-1)$ for some $m-1\leq j\leq 2m-1$. Hence $(-yq^l\text{ mod }n)$ is greater than or equal to $q^{m-2}$.
    Therefore, the product $\left(-xq^l\text{ mod }n\right)\left(-yq^l\text{ mod }n\right)$ is greater than $q^{2m}-1$. When $k\neq 0$, $\left(-xq^l\text{ mod }n\right)$ is greater than or equal to $q^{2m-k-1}-1$
    and $\left(-yq^l\text{ mod }n\right)$ is greater than or equal to $q^{2m-2k-1}(q^k-1)$ which implies the product $\left(-xq^l\text{ mod }n\right)\left(-yq^l\text{ mod }n\right)$ is greater than or equal to $(q^{2m-k-1}-1)(q^k-1)q^{2m-2k-1}$. Combining this with $\left(-xq^l\text{ mod }n\right)\left(-yq^l\text{ mod }n\right) < q^{2m}-1$, we get $k \geq m-1$. we also have the inequality $k \leq (2m-1)/2$. Combining both the inequalities we get $k = m-1$.
    When ${k = m -1}$,
    \begin{subequations}
    \begin{align}
        (-xq^{l}\text{ mod } n) &= \left(\sum\limits_{i=0}^{m-1} (q-1)q^i\right) + q^{m}\left(\sum\limits_{i=0}^{m-1} (q-1-x_i)q^i\right)\\
        (-yq^{l}\text{ mod } n) &= \left(q-1-y_{m}\right) + \left(\sum\limits_{j=1}^{m-1}(q-1)q^j\right) + q^{m}\left(\sum\limits_{j=0}^{m-1} (q-1-y_j)q^j\right)
    \end{align}
    \end{subequations}
    Since $xy$ is less than $(q^{2m}-1-q^{m-1})$, at least on of the $\left\{x_{m-2},\ x_{m-1},\ y_{m},\ y_{m-1}\right\}$ must be equal to zero. Therefore, from the above equations, minimum value of  $\left(-xq^l\text{ mod }n\right)\left(-yq^l\text{ mod }n\right)$ occurs when $(y_{m}) = 0$, and the minimum value is equal to $(q^{m}-1)^2$. Note that since $l$ is restricted to only odd values, the minimum value $(q^m -1)^2$ occurs only when $m$ is odd. Otherwise the minimum value is greater than $q^{2m}-1$. 
    
    \item  $2m-k-1 < l \leq 2m-1:$
    \begin{subequations}\label{apx_corollary_2_eq_12}
    \begin{align}
        (-xq^{l}\text{ mod } n) &= \left(\sum\limits_{i=0}^{l-(2m-k)} (q-1-x_{i+2m-l})q^i\right) + \left(\sum\limits_{i=l-(2m-k-1)}^{l-1} (q-1)q^i\right) + q^l\left(\sum\limits_{i=0}^{2m-l-1} (q-1-x_i)q^i\right)\\
        (-yq^{l}\text{ mod } n) &= \left(\sum\limits_{j=0}^{l-k-1} (q-1-y_{j+2m-l})q^j\right) + \left(\sum\limits_{j=l-k}^{l-1}(q-1)q^j\right) + q^l\left(\sum\limits_{j=0}^{2m-l-1} (q-1-y_j)q^j\right)
    \end{align}
    \end{subequations}
   
   Assume that the minimum value of $\left(-xq^l\text{ mod }n\right)\left(-yq^l\text{ mod }n\right)$ is less than $q^{2m}-1$. 
   Let us check if our assumption is a valid one. From equation \eqref{apx_corollary_2_eq_12}, we can say that $(-xq^l\text{ mod }n) \geq q^{l-2m+k+1}(q^{2m-k-1}-1)$ and $(-yq^l\text{ mod }n) \geq q^{l-k}(q^k-1)$. 
   This, together with our assumption, implies $2l-2$ must be less than $2m$.
   Observe that there doesn't exist $l,k$ which satisfy $2l-2<2m$ along with the inequalities $2m-k-1 < l$ and $k \leq (2m-1)/2$. 
   Therefore our assumption is not a valid one and the minimum value of $\left(-xq^l\text{ mod }n\right)\left(-yq^l\text{ mod }n\right)$ cannot be less than $q^{2m}-1$.
\end{compactenum}
\end{proof}

The following theorem gives an easy condition based on the designed distance to determine if a primitive narrow-sense hyperbolic code contains its Hermitian dual.

\begin{theorem}\label{theorem_5}
A primitive narrow-sense hyperbolic code of length $n\times n$ over $\mathbb{F}_{q^2}$, where $n=q^{2m}-1$, with $m>3$, contains its Hermitian dual if and only if the design distance $d$ satisfies $2 \leq d \leq \delta_h$, where
\begin{equation}\label{theorem_5_eq_1}
    \delta_h =
    \begin{cases}
        (q^m-1)^2 & m \text{ is odd} \\
        q^{2m} - 1 - q^{m-1} & m \text{ is even}
    \end{cases}
\end{equation}
\end{theorem}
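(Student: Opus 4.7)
The plan is to follow the structure of the proof of Theorem~\ref{theorem_3}, using Lemma~\ref{lemma_4} as the Hermitian dual-containment criterion and Lemma~\ref{apx_corollary_2} as the main structural input. The essential difference from the Euclidean case is that $Z^{-q}$ is generated by the \emph{odd} powers $q^l$, $l\in\{1,3,\ldots,2m-1\}$, since the defining set is a union of $q^2$-ary cyclotomic cosets over $\mathbb{F}_{q^2}$.

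\textbf{Sufficiency.} I first reduce to the boundary case $d=\delta_h$: if the code $D$ with design distance $\delta_h$ contains $D^{\perp_h}$, then for any $d\leq\delta_h$ the larger code $C\supseteq D$ satisfies $C^{\perp_h}\subseteq D^{\perp_h}\subseteq D\subseteq C$. By Lemma~\ref{lemma_4}, containment for $D$ amounts to showing that $(-xq^l\bmod n)(-yq^l\bmod n)\geq\delta_h$ for every $(x,y)\in Z_{des}$ and every odd $l\in\{1,3,\ldots,2m-1\}$. Using the $x\leftrightarrow y$ symmetry of $Z_{des}$ to restrict to points whose top nonzero $q$-ary digit of $x$ has index $k\leq(2m-1)/2$, I partition $Z_{des}$ into the sets $Z_{des,k}$ of Lemma~\ref{apx_corollary_2}. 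The desired bound is then exactly the content of Eqs.~\eqref{apx_corollary_2_eq_3}--\eqref{apx_corollary_2_eq_4}: for every such $k$, the two minima (over $l\neq 2m-k-1$ and $l=2m-k-1$) combine to $(q^m-1)^2$ when $m$ is odd and to $q^{2m}-1-q^{m-1}$ when $m$ is even, matching $\delta_h$ in both parities.

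\textbf{Necessity.} Assuming $d>\delta_h$, I exhibit an explicit element of $Z_{des}\cap Z^{-q}$ in each parity. For $m$ odd, I take $(x,y)=(q^m-1,q^m-1)$; since $xy=(q^m-1)^2=\delta_h<d$, we have $(x,y)\in Z_{des}$. Choosing $l=m$ (odd) and using $q^{2m}\equiv 1\pmod n$, the computation $-(q^m-1)q^m\equiv -q^{2m}+q^m\equiv q^m-1\pmod n$ shows that this point is a fixed point of $(x,y)\mapsto(-xq^l\bmod n,-yq^l\bmod n)$, hence lies in $Z^{-q}\cap Z_{des}$. For $m$ even, I take $(x,y)=(1,\,q^{2m}-1-q^{m+1})$; here $xy=q^{2m}-1-q^{m+1}<\delta_h<d$, so $(x,y)\in Z_{des}$. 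With $l=m-1$ (odd, since $m$ is even), the congruences $q^{2m}\equiv 1$ and $q^{3m-1}\equiv q^{m-1}\pmod n$ send $(x,y)$ to $(q^{2m}-1-q^{m-1},1)$, whose product equals $\delta_h<d$. This image lies in $Z_{des}\cap Z^{-q}$, and Lemma~\ref{lemma_4} then contradicts $C^{\perp_h}\subseteq C$.

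The main obstacle is the parity restriction on $l$, which shifts where the minimum of $(-xq^l\bmod n)(-yq^l\bmod n)$ is attained compared with Theorem~\ref{theorem_3}. When $m$ is even, the Euclidean-style minimizer $l=m$ is excluded (being even), forcing the nearest admissible odd value $l=m-1$ and hence the larger bound $q^{2m}-1-q^{m-1}$ rather than $(q^m-1)^2$; when $m$ is odd, $l=m$ is admissible and recovers $(q^m-1)^2$ on the boundary case $l=2m-k-1$ with $k=m-1$. Matching these minima precisely to $\delta_h$ and arranging the necessity witnesses so that both the original point and its image lie in $Z_{des}$ is the main bookkeeping; once Lemma~\ref{apx_corollary_2} is in hand, the rest of the argument is routine.
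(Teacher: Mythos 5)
Your proposal is correct and follows essentially the same route as the paper's proof: reduction to the boundary distance $\delta_h$, Lemma~\ref{lemma_4} plus the symmetry/partition into $Z_{des,k}$, and Lemma~\ref{apx_corollary_2} for sufficiency, with the fixed point $(q^m-1,q^m-1)$ at $l=m$ for odd $m$ and the pair $\{(\delta_h,1),(1,q^{2m}-1-q^{m+1})\}$ for even $m$ as necessity witnesses. The only cosmetic difference is that for even $m$ you start from $(1,q^{2m}-1-q^{m+1})$ with $l=m-1$ while the paper starts from $(\delta_h,1)$ with $l=m+1$; these are inverse maps of each other and yield the same intersection point.
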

%
\begin{proof}
    For proving $\mathscr{H}(n\times n,q^2;d)^{\perp_h} \subseteq \mathscr{H}(n\times n,q^2;d)$ when $d \leq \delta_h$, it is enough to show that $\mathscr{H}(n\times n,q^2;\delta_h)^{\perp_h} \subseteq \mathscr{H}(n \times n,q^2;\delta_h)$ since $\mathscr{H}(n\times n,q^2;d)$ contains $\mathscr{H}(n\times n,q^2;\delta_h)$. 
    Let $Z_{des}$ and $Z$ be the designed set and defining set of $\mathscr{H}(n\times n,q^2;\delta_h)$. 
    We need to show that for any $(x,y) \in Z_{des}$, $(-qxq^{2j}, -qyq^{2j} ) \bmod n \not\in Z_{des}$   for $j\in \mathbb{Z}_m$.   Alternatively, we need to show that
    \begin{equation}\label{theorem_5_eq_2}
        (-xq^{2j+1} \bmod n)(-yq^{2j+1} \bmod n) \geq \delta_h \mbox{ for }j\in \mathbb{Z}_m
    \end{equation}
    Equivalently, 
    $(-xq^{l} \bmod n)(-yq^{l} \bmod n) \geq \delta_h$ for 
    $l\in \{1, 3, 5, \ldots, 2m-1\}$.
    Similar to the Euclidean case, based on the $q$-ary expansions of $x,y$  as in Eq.~\eqref{theorem_3_eq_4}, we partition the points in $Z_{des}$ into disjoint sets $Z_{des,0}, Z_{des,1}, \ldots, Z_{des,2m-1}$. And for every $(x,y)$ in $Z_{des}$, $(y,x)$ is also in $Z_{des}$. This implies it is enough to  consider the following points instead of $Z_{des}$.
    \begin{equation}\label{theorem_5_eq_3}
        (x,y) \in Z_{des,k}, \text{ where } k \leq (2m-1)/2
    \end{equation}

    \begin{compactenum}[1)]
        \item  When $m$ is even: Since $\delta_h = q^{2m}-1-q^{m-1}$, from Lemma~\ref{apx_corollary_2}, it follows that
    \begin{align}\label{theorem_5_eq_4}
        \min\limits_{\substack{(x,y) \in Z_{des}\\ l \in \{1,3..2m-1\}}} (-xq^l\text{ mod }n)(-yq^l\text{ mod }n) = q^{2m}-1-q^{m-1} = \delta_h
    \end{align}
    \item When $m$ is odd: Since $\delta_h < q^{2m}-1-q^{m-1}$, from Lemma~\ref{apx_corollary_2}, it follows that
    \begin{align}\label{theorem_5_eq_5}
       \min\limits_{\substack{(x,y) \in Z_{des}\\ l \in \{1,3..2m-1\}}} (-xq^l\text{ mod }n)(-yq^l\text{ mod }n) &\geq \min\left\{q^{2m}-1-q^{m},(q^m-1)^2\right\} \nonumber\\
        &=(q^m-1)^2 = \delta_h
    \end{align}
\end{compactenum}

Let $C= \mathscr{H}(n\times n,q^2;d)$ and $Z_{des}$, and $Z$ its designed set
and defining set respectively. 
For proving the necessity of $d\leq \delta_h$ for $C \supseteq C^{\perp_h}$, we show 
that if $d>\delta_h$, then there exists a point $(x,y)\in Z_{des}$ and $l$ such that $(u,v)=(-qxq^{2l}, -qyq^{2l}) \bmod n \in Z^{-q}$ is also in $Z_{des}$.

\begin{enumerate}
    \item  When $m$ is odd: let $(x,y) = (\sqrt{\delta_h},\sqrt{\delta_h})$ and $l=(m-1)/2$. 
    Since $\delta_h < d$, $(\sqrt{\delta_h},\sqrt{\delta_h}) \in Z_{des}$.
    We have 
    \begin{align}
            -q\sqrt{\delta_h}q^{m-1} \bmod n &= -(q^m-1)q^{m} \bmod n \nonumber\\
            &= q^m -1 =\sqrt{\delta_h} 
    \end{align}
    Therefore, $(-qxq^{2l}\bmod n)(-qyq^{2l}\bmod n) = \delta_h <d$. 
    \item When $m$ is even: let $(x,y) = (\delta_h,1)$ and $l = m/2$. Then $(\delta_h,1) \in Z_{des}$.
    \begin{subequations}\label{theorem_5_eq_7}
        \begin{align}
            -qxq^{2l} \bmod n &= -(q^{2m}-1-q^{m-1})q^{m+1} \bmod n\nonumber\\
            &= 1 \\
            -qyq^{2l} \bmod  n
            &= -q^{m+1}\bmod (q^{2m}-1) \nonumber \\
            &= (q^{2m}-1-q^{m+1}) < \delta_h < d
        \end{align}
    \end{subequations}
\end{enumerate}
When $d > \delta_h$, for both even and odd $m>3$, we have  $Z_{des} \cap Z^{-q} \neq \emptyset$.
By Lemma~\ref{lemma_4}, $\mathscr{H}(n \times n,q^2;d)$ cannot contain its Hermitian dual when $d > \delta_h$.
\end{proof}

Similar to the Euclidean dual case, theorem \ref{theorem_5} can be extended to the non-primitive narrow-sense bicyclic hyperbolic codes. The following corollary gives a sufficiency condition to verify if a narrow-sense bicyclic hyperbolic code contains its Euclidean dual code.

\begin{corollary}\label{corollary_12}
Suppose $m = ord_n(q^2)$. Narrow-sense bicyclic hyperbolic code of length $n\times n$ over $\mathbb{F}_{q^2}$ contains its Hermitian dual if the design distance $d$ satisfies $2 \leq d \leq \Delta_h$, where
\begin{equation}
        \Delta_h = 
    \begin{cases}
        \left(\frac{n^2}{(q^{2m}-1)^2}\right)(q^m-1)^2 & m \text{ is even} \vspace{10pt}\\
        \left(\frac{n^2}{(q^{2m}-1)^2}\right)\left(q^{2m}-1 - q^{m-1}\right) & m \text{ is odd}
    \end{cases}
\end{equation}
\end{corollary}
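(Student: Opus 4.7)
The plan is to reduce Corollary~\ref{corollary_12} to Theorem~\ref{theorem_5} by scaling the designed set of the non-primitive code to that of a primitive hyperbolic code of length $(q^{2m}-1)\times(q^{2m}-1)$ over $\mathbb{F}_{q^2}$, mirroring the argument used in Corollary~\ref{theorem_7} for the Euclidean case.

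First I would set $N=q^{2m}-1$ and $r=N/n$. Since $m=ord_n(q^2)$ we have $n\mid N$, so $r$ is a positive integer. Given $C=\mathscr{H}(n\times n,q^2;d)$ with $d\leq\Delta_h$, code containment reduces the problem to showing $D^{\perp_h}\subseteq D$ for $D=\mathscr{H}(n\times n,q^2;\Delta_h)$; let $\overline{Z}_{des}$ denote its designed set. For each $(\overline{x},\overline{y})\in\overline{Z}_{des}$, define the lift $(x,y)=(r\overline{x},r\overline{y})$. Then $xy=r^2\overline{x}\overline{y}<r^2\Delta_h=\delta_h$, where $\delta_h$ is the primitive threshold of Theorem~\ref{theorem_5}, so $(x,y)$ lies in the designed set of the primitive hyperbolic code $\mathscr{H}(N\times N,q^2;\delta_h)$.

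Next I would verify the scaling identity for the modular arithmetic. For any integer $l\geq 0$, writing $-\overline{x}q^l=tn+s$ with $0\leq s<n$ and multiplying by $r$ gives $-xq^l=tN+rs$ with $0\leq rs<N$, so
\begin{equation*}
(-xq^l\bmod N)(-yq^l\bmod N)=r^2\,(-\overline{x}q^l\bmod n)(-\overline{y}q^l\bmod n).
\end{equation*}
By Theorem~\ref{theorem_5} applied through Lemma~\ref{lemma_4} to the primitive code, the left side is at least $\delta_h$ for every odd $l\in\{1,3,\ldots,2m-1\}$. Dividing by $r^2$ yields $(-\overline{x}q^l\bmod n)(-\overline{y}q^l\bmod n)\geq\Delta_h$ for the same range of $l$. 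This exactly covers the $q^2$-cyclotomic orbit elements appearing in $\overline{Z}^{-q}$, namely the points $(-q\overline{x}q^{2j}\bmod n,-q\overline{y}q^{2j}\bmod n)$ for $j\in\mathbb{Z}_m$, so none of them lies in $\overline{Z}_{des}$. Hence $\overline{Z}_{des}\cap\overline{Z}^{-q}=\emptyset$, and Lemma~\ref{lemma_4} gives $D^{\perp_h}\subseteq D$, which in turn forces $C^{\perp_h}\subseteq C$.

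The main obstacle is the bookkeeping around the scaling identity $(-xq^l\bmod N)=r(-\overline{x}q^l\bmod n)$ together with the exact matching of thresholds $\Delta_h=\delta_h/r^2$, which is precisely the content of the formula in the corollary statement once one substitutes $N=q^{2m}-1$. Once that alignment is in place the inequality propagates without further work, so no new structural analysis of cyclotomic cosets is required beyond what Theorem~\ref{theorem_5} already supplies.
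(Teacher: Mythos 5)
Your proof is correct and follows the same scaling argument the paper uses for the Euclidean analogue (Corollary~\ref{theorem_7}), which is clearly the intended (and omitted) proof here: the identity $(-r\overline{x}q^l \bmod N) = r\,(-\overline{x}q^l \bmod n)$ with $N=q^{2m}-1$, $r=N/n$, plus the minimum computations already established inside the proof of Theorem~\ref{theorem_5}, is all that is needed. The only caveat is that the two cases of $\Delta_h$ in the corollary's statement appear to have their even/odd labels swapped relative to Theorem~\ref{theorem_5}, so the clean identification $r^2\Delta_h=\delta_h$ that your argument relies on is the intended reading of the formula.
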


Based on the conditions obtained from theorem \ref{theorem_5} and \ref{corollary_12} for Hermitian dual containing, we can use the bicyclic hyperbolic codes to construct new quantum stabilizer codes.

\begin{proposition}[Hermitian  construction \cite{calderbank98,ashikhmin01}]
Let $C$ be an $[n,k,d]$ over $\mathbb{F}_{q^2}$ such that $C^{\perp_h} \subseteq C$, then there exists an $[[n,2k-n,d]]$ stabilizer code over $\mathbb{F}_q$.
\end{proposition}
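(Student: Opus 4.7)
The plan is to reduce the Hermitian construction to the symplectic stabilizer formalism by encoding $\mathbb{F}_{q^2}$-coordinates as pairs of $\mathbb{F}_q$-coordinates. I would fix an $\mathbb{F}_q$-basis $\{1,\omega\}$ of $\mathbb{F}_{q^2}$ and define the $\mathbb{F}_q$-linear map $\phi:\mathbb{F}_{q^2}^{n}\to\mathbb{F}_q^{2n}$ by $\phi(a+\omega b)=(a\mid b)$ for $a,b\in\mathbb{F}_q^{n}$. The image $V=\phi(C)$ is an $\mathbb{F}_q$-linear subspace of dimension $2k$, and the prospective stabilizer group is generated by Pauli operators $X^{a}Z^{b}$ indexed by an $\mathbb{F}_q$-basis of $\phi(C^{\perp_h})$, which has $\mathbb{F}_q$-dimension $2(n-k)$.

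The central step is to verify that $C^{\perp_h}\subseteq C$ is equivalent to commutativity of the associated Paulis, i.e.\ to the symplectic self-orthogonality $\phi(C^{\perp_h})\subseteq \phi(C^{\perp_h})^{\perp_s}$, where $\perp_s$ denotes the standard symplectic dual on $\mathbb{F}_q^{2n}$. The identity to establish is
\begin{equation*}
\mathrm{Tr}_{q^2/q}\!\bigl(\eta\,(u\cdot v^{q})\bigr)=\phi(u)^{T}J\phi(v)
\end{equation*}
for a suitable $\eta\in\mathbb{F}_{q^2}^{\times}$ and symplectic matrix $J$; non-degeneracy of the trace form then converts Hermitian orthogonality into symplectic orthogonality and yields $\phi(C^{\perp_h})^{\perp_s}=\phi(C)$. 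Combined with the hypothesis $C^{\perp_h}\subseteq C$, this gives $\phi(C^{\perp_h})\subseteq\phi(C)=\phi(C^{\perp_h})^{\perp_s}$, so the stabilizer is abelian.

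Having done this, I would invoke the standard symplectic stabilizer code construction: an $\mathbb{F}_q$-linear, symplectically self-orthogonal subspace $S\subseteq\mathbb{F}_q^{2n}$ of dimension $2(n-k)$ produces a stabilizer code on $n$ qudits of local dimension $q$ that encodes $2k-n$ logical qudits, with minimum distance equal to the least symplectic weight in $S^{\perp_s}\setminus S$. Since the symplectic weight of $\phi(c)$ equals the Hamming weight of $c$ over $\mathbb{F}_{q^2}$ (coordinate $i$ is nonzero in $\phi(c)$ iff $c_i\neq 0$), and $S^{\perp_s}=\phi(C)$, this weight is at least $d$, giving the claimed $[[n,2k-n,d]]_q$ parameters. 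The main obstacle is pinning down the trace-symplectic identity in the middle step: for a generic basis $\{1,\omega\}$ one must insert an invertible $\mathbb{F}_q$-linear change of coordinates and check that it preserves the symplectic form, which is cleanest when one chooses a normal or self-dual basis of $\mathbb{F}_{q^2}/\mathbb{F}_q$; the remaining dimension and weight bookkeeping is routine.
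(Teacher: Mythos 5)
The paper does not prove this proposition; it is quoted as a known result from the cited references, and your outline is precisely the standard trace--symplectic argument given there (Calderbank et al., Ashikhmin--Knill, Ketkar et al.), so it is essentially the ``same'' proof. The one step worth pinning down is the choice of $\eta$: taking $\eta=(\omega^{q}-\omega)^{-1}$ gives $\mathrm{Tr}_{q^2/q}(\eta)=0$, $\mathrm{Tr}_{q^2/q}(\eta\omega^{q})=1$, $\mathrm{Tr}_{q^2/q}(\eta\omega)=-1$, which yields your identity $\mathrm{Tr}_{q^2/q}\bigl(\eta\,(u\cdot v^{q})\bigr)=\phi(u)^{T}J\phi(v)$ exactly, and the reverse inclusion $\phi(C^{\perp_h})^{\perp_s}\subseteq\phi(C)$ then follows from the dimension count $\dim_{\mathbb{F}_q}\phi(C^{\perp_h})^{\perp_s}=2n-2(n-k)=2k=\dim_{\mathbb{F}_q}\phi(C)$ rather than from non-degeneracy alone. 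With that, the Hermitian hypothesis gives an abelian stabilizer of the right size, and since the symplectic weight of $\phi(c)$ equals the Hamming weight of $c$, the resulting code has distance at least $d$, as claimed.
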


\begin{corollary}[Quantum bicyclic codes II]
Let $q$ be a prime power, $n=q^{2m}-1$  for $m>3$ and $d< \delta_h$ as in Theorem~\ref{theorem_5}.
Then there exist quantum bicyclic codes of length $n^2$ and distance $\geq d$.
\end{corollary}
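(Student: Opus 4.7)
The plan is to combine the structural result of Theorem~\ref{theorem_5} with the Hermitian stabilizer construction stated just before the corollary. I would take the bicyclic hyperbolic code $C=\mathscr{H}(n\times n,q^2;d)$ as the classical ingredient. Because this code has length $n^2$ by construction, and because its minimum distance is at least its design distance $d$ (this is the defining guarantee for hyperbolic codes recalled after the definition in Section~\ref{sec:bg}), the parameters we need on the classical side are already in place.

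Next I would verify the Hermitian dual containment hypothesis for $C$. The assumption $d<\delta_h$ together with $m>3$ places us in the regime where Theorem~\ref{theorem_5} applies, so $C^{\perp_h}\subseteq C$. At this point the classical code $C$ has all three properties required by the Hermitian construction Proposition stated immediately above the corollary: it is an $[n^2,k,d']_{q^2}$ linear code with $d'\geq d$ and it contains its Hermitian dual.

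Invoking that Proposition then yields an $[[n^2,\,2k-n^2,\,d']]$ stabilizer code over $\mathbb{F}_q$, which is exactly a quantum bicyclic code of length $n^2$ with minimum distance at least $d$, as claimed. I do not expect any real obstacle here: the corollary is essentially a packaging statement, so the only thing to be slightly careful about is stating the parameters with the inequality $d'\geq d$ on the distance (rather than equality), since the design distance is only a lower bound on the true minimum distance; this carries over to the quantum code through the CSS/Hermitian construction without further work.
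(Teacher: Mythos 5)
Your proposal is correct and follows exactly the route the paper intends: the corollary is a direct packaging of Theorem~\ref{theorem_5} (which gives $C^{\perp_h}\subseteq C$ for $d<\delta_h$) with the Hermitian construction proposition stated immediately above it. Your care in noting that the design distance only lower-bounds the true distance, so the quantum code's distance is $\geq d$ rather than $=d$, matches the statement as written.
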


\section{Conclusion}
In this paper, we have studied bicyclic hyperbolic codes and proved some interesting structural properties of these codes.
Using these results, we can construct new quantum bicyclic codes.
There are other interesting structural properties of bicyclic codes worth further investigation.
One natural direction would be to compute the dimension and actual distance of these codes. Another exciting direction would be to check if our result can be generalized and find the existence of dual containing bicyclic hyperbolic codes of length $n_1\times n_2$ with design distance $\mathcal{O}(\sqrt{n_1n_2})$.
The theory of bicyclic codes is very rich, and we hope this work will motivate further research into quantum bicyclic codes.


\bibliographystyle{plain}
\bibliography{references}

\end{document}